\documentclass[11pt]{article}

\usepackage{amsfonts}
\usepackage{amsmath}
\usepackage{framed}
\usepackage{color}
\usepackage{graphicx}

\usepackage[letterpaper, top=1in, bottom=1in, left=1in, right=1in]{geometry}


\def\ie{{\itshape{i.e.}}}
\def\eg{{\itshape{e.g.}}}

\def\etal{{\itshape et al.}}

\usepackage{amsthm}
\theoremstyle{plain}
\newtheorem{theorem}{Theorem}

\newtheorem{lemma}{Lemma}
\newtheorem{conjecture}{Conjecture}
\newtheorem{proposition}{Proposition}
\theoremstyle{definition}
\newtheorem{example}{Example}

\newcommand{\BIGO}[1]{{\mathcal{O}{\kern-2pt}}\left( #1 \right)}
\newcommand{\BIGOMEGA}[1]{{\Omega{\kern-2pt}}\left( #1 \right)}
\newcommand{\LITTLEO}[1]{{\mathcal{o}{\kern-2pt}}\left( #1 \right)}
\newcommand{\LITTLEOMEGA}[1]{{\omega{\kern-2pt}}\left( #1 \right)}
\newcommand{\BIGTHETA}[1]{{\Theta{\kern-2pt}}\left( #1 \right)}

%
\newcommand{\ceil}[1]{\left\lceil #1 \right\rceil}

\def\isdef{\buildrel {\mathrm{def}} \over =}

\def\RR{{\mathbb{R}}}

\def\Cnd{{\mathcal{C}_{n,d}}}
\def\SND{{\mathcal{S}_{n,d}}}

\newcommand{\PROB}[1]{{\mathbb{P}{\kern-2pt}}\left\{ #1 \right\}}
\newcommand{\EXP}[1]{{\mathbb{E}{\kern-2pt}}\left\{ #1 \right\}}
\newcommand{\VAR}[1]{{\mathbf{Var}{\kern-2pt}}\left\{ #1 \right\}}
\newcommand{\STDEV}[1]{{\mathrm{stdev}{\kern-2pt}}\left\{ #1 \right\}}
\newcommand{\tendsinlaw}{\mathop{\lower2pt\hbox{$\overset{\kern-1pt\lower2pt\hbox{$\scriptstyle\mathcal{L}$}}{\rightarrow}$}}}
\newcommand{\inlaw}{\mathop{\lower0pt\hbox{$\overset{\kern-1pt\lower0pt\hbox{$\scriptstyle\mathcal{L}$}}{=}$}}}

\newcommand{\BERNOULLI}[1]{{\mathrm{Bernoulli}{\kern-2pt}}\left( #1 \right)}
\newcommand{\BINOMIAL}[1]{{\mathrm{binomial}{\kern-2pt}}\left( #1 \right)}
\newcommand{\BETA}[1]{{\mathrm{beta}{\kern-2pt}}\left( #1 \right)}

\newcommand{\hide}[1]{}

\begin{document}

\title{Random hyperplane search trees in high dimensions}
\author{Luc Devroye\footnotemark[2] \\ \tt{\small luc.devroye@gmail.com} \and James King\footnotemark[3] \\ \tt{\small jamie.king@gmail.com}}
\date{\today}
\maketitle

\renewcommand{\thefootnote}{\fnsymbol{footnote}}

\footnotetext[2]{School of Computer Science,
McGill University, Montreal, Canada H3A 2A7}
\footnotetext[3]{Department of Physics, University of Oxford, Clarendon Laboratory, Parks Road, Oxford, UK OX1 3PU}
\renewcommand{\thefootnote}{\arabic{footnote}}

\hide{
\title{%
  {Random hyperplane search trees in high dimensions}%
  \thanks{This research was supported by NSERC grant A3456.}
}

\author{%
  Luc~Devroye%
  \thanks{\affil{McGill University}, 
          \email{luc.devroye@gmail.com}}\,
  and James King%
  \thanks{\affil{University of Oxford}, 
          \email{jamie.king@gmail.com}}
}
%
}


\begin{abstract}
Given a set $S$ of $n\geq d$ points in general position in $\RR^d$, a \emph{random hyperplane split} is obtained by sampling $d$ points uniformly at random without replacement from $S$ and splitting based on their affine hull.  A \emph{random hyperplane search tree} is a binary space partition tree obtained by recursive application of random hyperplane splits.  We investigate the structural distributions of such random trees with a particular focus on the growth with $d$.  A \emph{blessing of dimensionality} arises---as $d$ increases, random hyperplane splits more closely resemble perfectly balanced splits; in turn, random hyperplane search trees more closely resemble perfectly balanced binary search trees.

We prove
that, for any fixed dimension $d$, a random hyperplane search tree storing $n$ points has height at most $(1+\mathcal{O}(1/\sqrt d))\log_2{n}$ and average element depth at most $(1+\mathcal{O}(1/d))\log_2{n}$ with high probability as $n\to\infty$.  Further, we show that these bounds are asymptotically optimal with respect to $d$.
\end{abstract}



\pagestyle{plain}


\section{Introduction}

Point sets in $\RR^d$ can be partitioned recursively by a number of possible trees.  The early, and still most popular, choices are the $k$-d tree and the quadtree.  The $k$-d tree takes a point from the set and partitions the space into two sets with a hyperplane containing the point that is perpendicular to one of the axes.  In a quadtree, the split is into $2^d$ quadrants obtained by shifting the origin to the point in question.  A lot of ink has been spilled on the analysis of the shapes of the trees for random point sets---for a summary and mini-survey, see Devroye \cite{devroye99}.  

In this paper, we focus on deterministic point sets, outside the control of the user, and random partitions that are built on them.  For example, in either of the two trees mentioned above, one could choose a splitting point uniformly at random, and make independent choices recursively on the subsets.  We assume throughout that points are in general position (no three on a line, no four on a plane, and so forth).  In both examples, if the set of data points lies on the moment curve $\big\{ (x, x^2, \ldots, x^d ): x \in \RR \big\}$, then the tree thus obtained is statistically equivalent to a random binary search tree.


Analysis of random tree data structures typically focusses on two functions that quantify the level of balance: the depth (specifically, the mean point depth) and the height (\ie, the maximum point depth).  These values are of particular practical importance---when searching for point in the tree, they correspond respectively to the average-case and worst-case query times.  For a perfectly balanced binary search tree, using $D^*_n$ and $H^*_n$ to denote the depth and the height, we have 
$$
\lim_{n\to\infty} \frac{D^*_n}{\log_2 n} ~=~ \lim_{n\to\infty} \frac{H^*_n}{\log_2 n} ~=~ 1~,
$$
which is the best we can hope for when dealing with binary trees.  

A random $k$-d tree in any dimension has the same shape as a random binary search tree---notably, the distribution does not depend on the structure of the point set, only its size.  We use $H_n$ and $D_n$ to denote the height and depth of a tree storing $n$ points.  It is known that 
$D_n / \log_2 n \to 2/\log 2 = 2.88539\dots$
 in probability \cite{lynch65,knuth73,devroye88}.  The limit law for $D_n$ was derived by Devroye \cite{devroye88}: $(D_n - 2 \log n )/\sqrt{2 \log n} ~ \tendsinlaw ~ {\cal N} (0,1)$.  Robson \cite{robson79}, Pittel \cite{pittel84}, and Devroye \cite{devroye86,devroye87} showed that 
$H_n / \log_2 n \to 4.31107/\log 2 = 6.21956\dots$ 
in probability.  See Mahmoud \cite{mahmoud92} for more background.

\subsection{Random hyperplane search trees}
For a given set $S$ of $n$ points in general position in $\RR^d$, a \emph{random hyperplane search tree} is constructed as follows.  If $n \geq d$, it selects at the root level $d$ points uniformly at random without replacement from the $n$ data points, and considers the hyperplane through these points, \ie, their affine hull.  These $d$ pivot points are associated with the root node and remain there.  The hyperplane splits the remaining $n-d$ points into two sets that are handled recursively and independently.  If $n < d$, no splitting is applied, and all $n$ points are associated with the root, which becomes a leaf.  This construction guarantees that each internal node holds $d$ data points and each leaf node holds between $0$ and $d-1$ data points.  See Figure \ref{fig:hst} for an example.

\begin{figure}
\centering
\includegraphics[width=\textwidth]{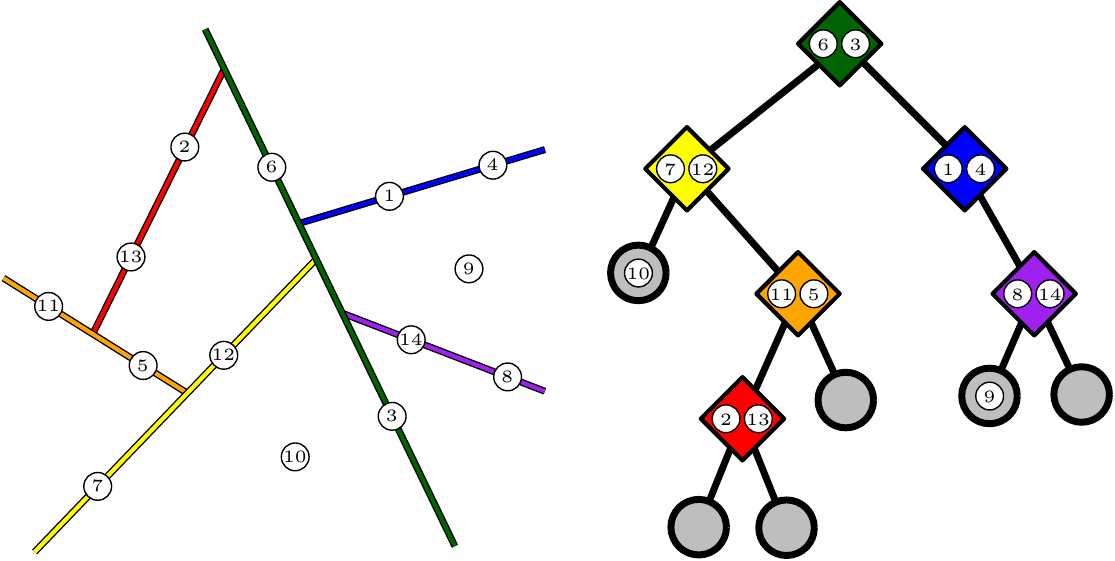}
\caption{\label{fig:hst}A hyperplane search tree in $\RR^2$: the point set and the hyperplane splits (left) and the corresponding tree data structure (right).  Internal tree nodes correspond to hyperplane splits and contain $d$ data points each.  External tree nodes correspond to cells and contain between 0 and $d-1$ data points each.}
\end{figure}

A key feature of hyperplane search trees is that they are constructed independently of the axes and are therefore robust to affine transformations of the underlying point set.  If the set of points contained by a $k$-d tree undergoes a rotation, the $k$-d tree would have to be reconstructed, however this is not the case for hyperplane search trees.

\paragraph{Applications}
Hyperplane search trees have been used since the 1970s in many applications of statistics.  For example, Mizoguchi \etal~\cite{mizoguchi77} highlighted their use in pattern recognition and You and Fu \cite{you76} considered their use as \emph{tree classifiers}.   Tree-based decisions in pattern recognition are popular because they take small computational efforts in terms of $n$. This is
especially crucial when decisions must be made on-line, in real-time.  Not only does the logarithmic behavior in $n$ matter, but also the asymptotic constants.  For an introduction to tree classification, see chapter 20 of Devroye, Gy\"orfi and Lugosi \cite{devroye96}.  In computational geometry, trees based upon partitions
of space by means of hyperplanes are ubiquitous.  See for example the
survey of Edelsbrunner and Van Leeuwen \cite{edelsbrunner83}, or the work of Haussler and Welzl \cite{haussler87} on simplex range queries.  For more examples and references, see section 2 of our previous paper \cite{devroye09}.

\hide{
Fuchs, Kedem and Naylor \cite{fuchs80} introduce the \underbar{{\sc bsp
trees}} (``binary space partition trees'') for use in graphics
applications. The space is split in two linear halfspaces; each
halfspace may in turn be split by a linear hyperplane, and so
forth. If a viewer sits in a given polyhedral set in this partition,
and wants to project the world onto his/her view plane, the {\sc bsp}
tree aids in establishing the order in which the polyhedral cells must be
drawn so as not to cause visibility problems. Basically, one should
consider polyhedra in depth-first-search order, where the
depth-first-search first visits halfspaces that would not contain the
viewer, so that polyhedra are visited from ``far'' to ``near'' (this
is called the painter's algorithm). For more on the hidden surface
elimination with the aid of {\sc bsp} trees, see Samet \cite{samet90a,samet90b}, or
Fuchs, Abram and Grant \cite{fuchs83}.  While {\sc bsp} trees are not
hyperplane trees (because we do not take data points to generate the
partition), they are intimately related and indicate interesting
applications of hyperplane trees in hidden surface elimination and
beam tracing. See also Sung and Shirley \cite{sung92} and Kaplan \cite{kaplan85}.
}

\subsection{Results}
Define $\SND = \{S:S\subset\RR^d,\, |S|=n,\, S~\text{is in general linear position}\}$.  For a set $S\in\SND$ we use $H(S)$ and $D(S)$ to denote the height and mean data point depth of a random hyperplane search tree built on $S$.  $H(S)$ and $D(S)$ are random variables.  By a trivial coupling argument we have that $H(S)$ stochastically dominates $D(S)$, \ie, $\PROB{D(S)\leq t} \geq \PROB{H(S)\leq t}$ for any value of $t$.  In order to more cleanly express bounds on $H(S)$ and $D(S)$, we define
\begin{align}
\label{def:ch}C_H(d) &\isdef \inf\left\{c\in\RR\,:\, \lim_{n\to\infty} \, \max_{S\in\SND} \PROB{\frac{H(S)}{\log_2 n} \,\leq\, c} ~ = ~ 1\right\},\\
\label{def:cd}C_D(d) &\isdef \inf\left\{c\in\RR\,:\, \lim_{n\to\infty} \, \max_{S\in\SND} \PROB{\frac{D(S)}{\log_2 n} \,\leq\, c} ~ = ~ 1\right\}.
\end{align}
The phenomenon that we wish to investigate is that uniformly over all sets $S\in \SND$, the behavior of $H(S)$ and $D(S)$ is nearly optimal when $d$ is large.  
It is already known \cite{devroye09} that $C_H(1) = C_H(2) = 6.21956\dots$ and that $C_H(d) < C_H(1)$ for $d \ge 3$, thus showing that hyperplane search trees outperform random binary search trees or $k$-d trees for all dimensions, with the improvement being strict when $d \ge 3$.  The present note makes this more precise, and shows that in fact, $\lim_{d\to\infty}C_H(d) = 1$.  Thus, by pushing up $d$, we can ensure almost perfectly balanced trees almost all the time, as we have the trivial lower bound $H(S) \ge \log_2 n$.
We also derive expressions and bounds on $C_H(d)$ and $C_D(d)$ as we proceed.

It is natural to go beyond this result and ask how quickly random hyperplane search trees become perfectly balanced as $d$ increases.  We find that the constants corresponding to height and average depth decay at different rates.  The main contribution of this paper is proving asymptotically optimal bounds for these rates, stated in the following two theorems:
\begin{theorem}\label{thm:height}
\begin{enumerate}
\item $C_H(d) = 1 + \BIGO{d^{-1/2}}$.
\item This bound is asymptotically optimal since there exists a function $g_H(d) = 1+\BIGOMEGA{d^{-1/2}}$ such that
$$
\lim_{n\to\infty} \, \max_{S\in\SND} \, \PROB{~\frac{H(S)}{\log_2 n} \,\geq\, g_H(d)~} ~ = ~ 1~.
$$
\end{enumerate}
\end{theorem}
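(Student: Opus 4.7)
The plan is to prove both parts via branching-random-walk analysis. For $S \in \SND$, let $V_S \in [0,1]$ denote the (random) fraction of the $n-d$ non-pivot points of $S$ falling on one side of the random hyperplane split, and set
\[
\phi_S(s) \isdef \EXP{V_S^s + (1-V_S)^s}, \qquad \phi_d(s) \isdef \sup_{n \ge d,\, S \in \SND} \phi_S(s).
\]
A classical weighted-branching identity---$\EXP{\sum_{v \text{ at depth } h} \mathrm{size}(v)^s} \le n^s \phi_d(s)^h$ for a root of size $n$---combined with Markov applied to nodes of size at least $1$ yields $\PROB{H(S) > h} \le n^s \phi_d(s)^h$, hence
\[
C_H(d) \;\le\; \inf_{s > 1} \frac{s}{\log_2(1/\phi_d(s))}.
\]

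For part (i), the central technical claim is a uniform upper bound of the form $\phi_d(s) \le 2^{1-s}(1 + C s^2/d)$, valid for $1 \le s \le \lambda \sqrt d$; it encodes uniform concentration of $V_S$ around $1/2$ at scale $\BIGO{d^{-1/2}}$. My approach is to relate $\phi_S(s) - 2^{1-s}$ to the even central moments $\EXP{(V_S - 1/2)^{2k}}$ via a Taylor expansion of $v^s + (1-v)^s$ around $v = 1/2$, and to bound each such moment by $\BIGO{d^{-k}}$ via exchangeability: swapping the roles of pivots and non-pivots re-expresses the $2k$-th central moment of $V_S$ as a normalized count over $(d+2k)$-tuples of $S$ whose hyperplane separations have a prescribed structure, and combinatorial-geometric arguments (Radon-type partitions, or bounds exploiting the VC-dimension $d+1$ of halfspaces) supply the required rate. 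Substituting and optimizing, with the optimum at $s - 1 = \BIGTHETA{\sqrt d}$ (well inside the admissible range), yields $C_H(d) \le 1 + \BIGO{d^{-1/2}}$.

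For part (ii), I would exhibit an explicit family $\{S_d \in \SND\}$ satisfying a matching lower bound $\phi_{S_d}(s) \ge 2^{1-s}(1 + c_0 s^2/d)$ on the same range of $s$. A natural candidate is a symmetric configuration with recursive self-similarity---for instance, $n$ points distributed as slight perturbations of equal clusters along orthogonal directions---for which a direct computation gives $\VAR{V_{S_d}} = \BIGTHETA{1/d}$ at every recursive subproblem. A Biggins/Kingman-type lower bound on the leftmost particle of the associated branching random walk then shows that the rate function $I_{S_d}$ of the per-step log-shrinkage $-\log_2 V_{S_d}$ satisfies $I_{S_d}(1 - \delta) \le \ln 2$ for some $\delta = \BIGOMEGA{d^{-1/2}}$---since $I(x) \sim (x - \mu)^2/(2\sigma^2)$ near the mean $\mu$ with $\sigma^2 = \BIGTHETA{1/d}$---yielding $H(S_d)/\log_2 n \ge 1 + \BIGOMEGA{d^{-1/2}}$ with high probability. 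The principal obstacle is the uniform upper bound in part (i): adversarial sets (concentrated near a low-dimensional subspace, or mimicking the moment-curve structure in $\RR^d$) must be handled geometrically without invoking any structural niceness of $S$, so the argument will have to exploit the freedom in the random pivot selection rather than any global property of $S$. A secondary difficulty for part (ii) is guaranteeing that the variance lower bound propagates through all recursive subproblems, which motivates the self-similar symmetric construction.
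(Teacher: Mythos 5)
Your overall framework---first-moment method for the height via an exponential/power moment of the split fraction---is the same one the paper uses (its Lemma~\ref{lem:height} is exactly your inequality $\PROB{H(S)>h}\le (n/(d+1))^s\phi_d(s)^h$ with $\phi$ replaced by $2\EXP{{Z^*}^\lambda}$, and the optimal $\lambda=\Theta(\sqrt d)$). But the central claim you hang the upper bound on, namely $\phi_d(s)\le 2^{1-s}(1+Cs^2/d)$ for $1\le s\le\lambda\sqrt d$ via the moment estimate $\EXP{(V_S-1/2)^{2k}}=\BIGO{d^{-k}}$ uniformly in $k$ and $S$, is not something the paper proves, and it is not reachable by the tools the paper actually uses for Theorem~\ref{thm:height}. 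The paper's elementary geometry (Lemma~\ref{lem:small_balance}, via McMullen's Upper Bound Theorem applied to $d+2$ points) only controls the \emph{second} moment, giving a Chebyshev--Cantelli tail $\PROB{N/n\ge x}\lesssim 1/(d(x-1/2)^2)$. From boundedness and the second moment alone one gets $\EXP{(V_S-1/2)^{2k}}=\BIGO{4^{-k}/d}$, not $\BIGO{d^{-k}}$; plugging this into your Taylor series, the tail of $\sum_j\binom{s}{2j}\EXP{v^{2j}}$ behaves like $2^s/d$, which blows up at the optimizing scale $s=\Theta(\sqrt d)$. To make your $\BIGO{d^{-k}}$ moment bound work you would effectively need the sub-Gaussian tail $\PROB{N/n\ge x}\lesssim e^{-cd(x-1/2)^2}$, which is precisely the content of Wagner's $({\le}k)$-facet bound (Lemma~\ref{lem:wagner}). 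That is a real theorem with a serious proof, not something your ``exchangeability $+$ Radon partitions $+$ VC dimension'' sketch can be expected to reproduce; and it is overkill here, since the paper reserves Wagner's result for the depth bound (Theorem~\ref{thm:depth}) and obtains the height bound with the weaker second-moment input. The paper sidesteps your strong uniform-in-$s$ inequality entirely: it extracts a dominating random variable $Z^*$ with a polynomially decaying tail (Example~\ref{ex:hyperplane}), writes $Z^*\le \tfrac12+a\sqrt{E+b}$ with $a=\Theta(1/\sqrt d)$, and then chooses the specific $\lambda=1/(2a)$ so that $\EXP{(2Z^*)^\lambda}$ is a $d$-independent constant $\rho$. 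That single evaluation of the moment generating function at one point is all that is needed.

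For the lower bound (part~2) your general strategy---exhibit a family $S_d$ with split variance $\BIGTHETA{1/d}$ uniformly in the recursion, then invoke a Biggins--Hammersley--Kingman first-birth bound for the branching random walk in $-\log$ subtree size---is the correct conceptual framework and matches what the paper does implicitly. However, the configuration you propose (``equal clusters along orthogonal directions'') is not obviously recursively self-similar: after a few splits the surviving points need not resemble the original configuration, so the $\BIGTHETA{1/d}$ variance claim at every subproblem is unsupported. The paper instead takes the moment curve, whose crucial property is that any window of consecutive points is affinely a moment curve again, so the split law is the \emph{same} beta$(\lceil d/2\rceil,\lceil d/2\rceil)$ at every internal node. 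This identifies the tree with a fringe-balanced (median-of-$(2t+1)$) tree, for which Devroye (1993) already characterized the height constant $C(t)$ as the solution of an explicit transcendental system; Proposition~\ref{prop:moment_curve_height} then proves $C(t)\ge 1/\log 2 + c/\sqrt t$ by a direct reparametrization $\lambda=\alpha\sqrt t$, $1/c=\log 2-\beta/\sqrt t$. You would do better to replace your ad hoc construction by the moment curve and either cite the median-of-$(2t+1)$ result as the paper does, or run your BRW rate-function computation directly for the beta split law, for which your local expansion $I(x)\approx (x-\mu)^2/(2\sigma^2)$ with $\sigma^2=\BIGTHETA{1/d}$ is indeed valid.
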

\begin{theorem}\label{thm:depth}
\begin{enumerate}
\item $C_D(d) = 1 + \BIGO{d^{-1}}$.
\item This bound is asymptotically optimal since there exists a function $g_D(d) = 1+\BIGOMEGA{d^{-1}}$ such that
$$
\lim_{n\to\infty} \, \max_{S\in\SND} \, \PROB{~\frac{D(S)}{\log_2 n} \,\geq\, g_D(d)~} ~ = ~ 1~.
$$
\end{enumerate}
\end{theorem}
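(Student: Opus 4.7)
The proof has two parts: part~1 establishes $C_D(d) \leq 1 + \BIGO{1/d}$, and part~2 constructs an explicit set that achieves $(1 + \BIGOMEGA{1/d}) \log_2 n$ with high probability.

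\emph{Upper bound.} I would analyze depth by a per-level entropy analysis. Tracking a uniformly random data point $X$ down the tree, at each step the subtree containing $X$ shrinks by factor $p$ or $1-p$, where $p = L/(n-d)$ is the (random) fraction of non-pivots going to one side; the expected log-decrease at each step is $\EXP{H(p)}$, with $H$ the binary entropy. The key geometric claim is $\EXP{H(p)} \geq 1 - 1/(d+1)$ uniformly over $S \in \SND$. To prove it, consider a random $(d+2)$-subset $T \subset S$ with its Radon partition $(A_T, B_T)$ determined by the unique-up-to-scalar affine dependence $\sum_{t \in T} c_t \cdot t = 0$, $\sum_t c_t = 0$: $A_T = \{t : c_t > 0\}$, $B_T = \{t : c_t < 0\}$. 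Examining where the segment between two ``test'' points of $T$ meets the hyperplane through the remaining $d$ shows that two points of $T$ lie on the same side of that hyperplane if and only if they lie in different parts of $(A_T, B_T)$. Counting ordered same-side pairs then gives
$$
\EXP{L(L-1) + R(R-1)} \;=\; \frac{2(n-d)(n-d-1)}{(d+1)(d+2)} \cdot \EXP{|A_T|\,|B_T|},
$$
and since $|A_T| + |B_T| = d+2$ forces $|A_T|\,|B_T| \leq (d+2)^2/4$, combined with the inequality $H(p) \geq 4p(1-p)$ this yields $\EXP{H(p)} \geq d/(d+1)$. Plugging this into the recurrence $S_m = (m-d) + \EXP{S_L + S_R}$ for the total depth and doing a straightforward induction gives $\EXP{D(S)} \leq (1 + 1/d) \log_2 n + \BIGO{1}$, uniformly in $S$.

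\emph{Lower bound.} Take $S$ to be $n$ points on the moment curve $\{(t, t^2, \ldots, t^d) : t \in \RR\}$. A Vandermonde calculation shows that for $(d+2)$ moment-curve points with parameters $t_1 < \cdots < t_{d+2}$, the coefficients of the affine dependence strictly alternate in sign, so the Radon partition is $(\{t_1, t_3, \ldots\}, \{t_2, t_4, \ldots\})$ with $|A_T|\,|B_T| = \lfloor (d+2)/2 \rfloor \lceil (d+2)/2 \rceil$, saturating the $(d+2)^2/4$ bound up to an additive $\BIGO{1}$. A direct combinatorial computation based on parity arguments over random $(d+1)$- and $(d+2)$-subsets of the curve parameters then yields $\EXP{(p-1/2)^2} = 1/(4(d+1))$. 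Using the global inequality $1 - H(p) \geq (2/\ln 2)(p-1/2)^2$ (which follows from $|H''(p)| \geq 4/\ln 2$ on $(0,1)$) gives $\EXP{H(p)} \leq 1 - \BIGOMEGA{1/d}$. Since every subset of moment-curve points is itself on the moment curve, this bound propagates through the recursion, and a matching induction produces $\EXP{D(S)} \geq (1 + \BIGOMEGA{1/d}) \log_2 n - \BIGO{1}$.

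\emph{Concentration and main obstacle.} To convert these expectation bounds into the high-probability statements defining $C_D(d)$ and $g_D(d)$, I would establish $\VAR{D(S)} = \LITTLEO{\log^2 n}$ via a second-moment recurrence parallel to the one for $\EXP{D(S)}$ and invoke Chebyshev's inequality. The main conceptual step is the Radon identity linking the same-side event to sign patterns of the affine dependence; once this is in hand, the combinatorial optimization $|A_T|\,|B_T| \leq (d+2)^2/4$ and its moment-curve saturation are immediate. The concentration step is routine in spirit but technically the most delicate: the recursive tree structure induces nontrivial correlations between the depths of different data points, so the second-moment recursion must carefully track these covariances---in the same vein as the classical variance analysis for random binary search trees---in order to close.
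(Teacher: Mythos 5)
Your proposal takes a genuinely different route from the paper, and the core ideas are sound.  The paper's upper bound proceeds in two stages: an elementary argument using the ``small balance lemma'' (their Lemma 1), which is proved via McMullen's Upper Bound Theorem and the combinatorics of cyclic polytopes, fed through a Chebyshev--Cantelli tail bound and a ``dominated split variable'' framework, yielding only $C_D(d) = 1 + \BIGO{\log(d)/d}$ (their Proposition 4); and then a second pass that removes the spurious $\log d$ by invoking Wagner's nontrivial $(\le k)$-facet bound (their Lemma 5) together with Hoeffding's inequality.  Your approach cleverly avoids the $\log d$ loss without Wagner's theorem.  Your Radon-partition derivation of $\PROB{\text{same side}} = |A_T||B_T|/\binom{d+2}{2}$ is a correct (and arguably more transparent) reformulation of the paper's Lemma~1---it recovers the same numbers $\frac{d+2}{2(d+1)}$, $\frac{d+3}{2(d+2)}$, and $\frac{2}{d+2}$---and the identity $\EXP{L(L-1)+R(R-1)} = \frac{2(n-d)(n-d-1)}{(d+1)(d+2)}\EXP{|A_T||B_T|}$ is the same variance computation the paper carries out inside its Lemma~2.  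The decisive difference is what you do next: instead of converting $\EXP{(p-1/2)^2} = \BIGO{1/d}$ to a Chebyshev tail and thence to a dominating variable (which is exactly where the $\log d$ appears, from integrating $\int x^{-1}\,dx$ in $\EXP{V^2}$), you couple the identity $\EXP{(p-1/2)^2} = \tfrac14 - \EXP{p(1-p)}$ with the pointwise entropy inequality $H(p) \ge 4p(1-p)$ to bound $\EXP{H(p)}$ directly.  The resulting recursion on expected total path length, $S_m = (m-d) + \EXP{S_L + S_R}$, closes with a concave function of the split, so no stochastic domination or convexity-based Marshall--Olkin step is needed.  Your lower bound also departs from the paper, which simply cites the Poblete--Munro law of large numbers for median-of-$(2t+1)$ trees and expands the resulting harmonic sum; your alternating-sign Vandermonde calculation and the reverse inequality $1 - H(p) \ge \tfrac{2}{\ln 2}(p - 1/2)^2$ give a self-contained derivation of the same $1 + \BIGOMEGA{1/d}$ conclusion.

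The one genuine gap is the concentration step, and you are right to flag it as the delicate point, because your expectation-level argument does not hand you a tail bound for free.  The theorem's quantities $C_D(d)$ and $g_D(d)$ are defined via convergence in probability of the \emph{mean} depth $D(S)$, and Markov's inequality applied to $\EXP{D(S)} \le (1 + \BIGO{1/d})\log_2 n$ gives a probability bounded away from zero, not tending to zero.  The paper's domination machinery (their Lemma 4) is designed precisely so that the induction produces the exponential tail $\PROB{D_n \ge t} \le n^{\lambda}\varphi(\lambda)^t$ for the depth of a random node, and that polynomial decay---combined with the height bound from Theorem 1 to control the contribution of the few deep points---is what actually delivers the high-probability statement for $D(S)$.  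Your recursion bounds only $\EXP{D(S)}$; the variance recurrence you sketch would indeed suffice (one expects $\VAR{D(S)} = \BIGO{1}$ as for random BSTs), but this must be carried out, and the covariance bookkeeping across the recursive split is not trivial.  A cleaner fix within your framework would be to upgrade the per-level entropy bound to a bound on $\EXP{p^{1+\lambda} + (1-p)^{1+\lambda}}$ for small $\lambda > 0$, recovering the paper's exponential tail; but then one has to check that the $\BIGO{1/d}$ rate survives, since $p \mapsto p^{1+\lambda}+(1-p)^{1+\lambda}$ is convex in $p$ and no longer controlled by $\EXP{p(1-p)}$ alone.  With that piece supplied, your argument would be a valid---and more elementary---alternative to the paper's proof, notably dispensing with Wagner's $(\le k)$-facet bound.
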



\subsection{Outline}

In Section \ref{sec:momentcurve} we examine random hyperplane search trees built on \emph{moment curve point sets}.  These point sets are conjectured to yield the most unbalanced random hyperplane splits.  We discuss their connection with median-of-$(2t+1)$ trees and give simple, closed-form asymptotic lower bounds for the constants governing the height and depth of these trees.  These provide the tightness parts of Theorems \ref{thm:height} and \ref{thm:depth}.

In Section \ref{sec:auxiliary}, we propose several simple lemmas that are good enough to
provide tight asymptotics for the height.  In Section \ref{sec:height} we consider the height of dominated trees and prove Theorem \ref{thm:height} using our simple lemmas from Section \ref{sec:auxiliary}.

In Section \ref{sec:dominated} we introduce two lemmas providing simple and powerful bounds for the analysis of random split trees.  The first lemma bounds the logarithmic moment of a class of random variables that often arise in the analysis of random split trees.  The second lemma bounds the depth of a random split tree using a \emph{dominating split variable}.  We apply these lemmas to obtain an almost-tight depth bound using our simple geometric lemmas from the previous section.

Finally, in Section \ref{sec:kfacets} we introduce a stronger balance lemma proved by Wagner \cite{wagner06} and restate it in the language of this paper.  Using this stronger balance lemma, we prove an asymptotically tight depth bound.

\section{Moment curve point sets and median-of-$(2t+1)$ trees}\label{sec:momentcurve}

The tightness parts of Theorems \ref{thm:height} and \ref{thm:depth}
can be shown for the moment curve data $X = \{x_1,\dots,x_n\}$ where
$$
x_i = (i, i^2, \ldots , i^d)
$$
The points on the moment curve are parametrically ordered,
and thus we can order them by first coordinate and refer
to the points by their index between $1$ and $n$.

Analysis of random hyperplane splits on such point sets is quite clean.  Choose $d$ integers uniformly at random.  This yields $d+1$ (possibly empty) intervals into which
the other points fall. Number the intervals.
One side of the hyperplane corresponds to all odd-numbered intervals,
and the other side to the even-numbered ones.
If the intervals catch $N_1, N_2, \ldots, N_{d+1}$ points
(with sum $n-d$, of course), then one subtree of the root
has size $N_1 + N_3 + \cdots$ data points, and the other one
$N_2 + N_4 + \cdots$ data points.
A statistically equivalent description yielding the same
interval sizes uses $n$ i.i.d.\ uniform $[0,1]$ random
variables $U_1, \ldots U_n$. Use $U_1, \ldots , U_d$ to
define the $d+1$ uniform spacings of $[0,1]$,
which we call $S_1, \ldots, S_{d+1}$.
Then ``throw'' the remaining $n-d$ points into the intervals.
The cardinalities are distributed as $(N_1,\ldots,N_{d+1})$,
and are multinomial $(n-d, S_1, \ldots, S_{d+1})$.
The size of the odd side of the hyperplane thus
is distributed as a sum of multinomial components---it is
binomial $(n-d, S_1 + S_3 + \cdots )$.
It is well-known that uniform spacings are identically distributed
and that their distribution is permutation-invariant (see, e.g., Pyke, 1965).
Thus, the odd side of the hyperplane is of size distributed as
a binomial $(n-d, S_1 + S_2 + \cdots + S_{(d+1)/2})$ if $d$ is odd
and as a binomial $(n-d, S_1 + S_2 + \cdots + S_{(d+2)/2})$ if $d$ is even.
But $S_1 + S_2 + \cdots + S_k$ is distributed as a beta $(k, d+1-k)$
random variable.
Thus, the root split for the moment curve data yields a left subtree
that is binomial $(n-d, \hbox{\rm beta}((d+1)/2, (d+1)/2) )$
when $d$ is odd. For $d$ even, we have with equal probability
a binomial $(n-d, \hbox{\rm beta}((d+2)/2, d/2) )$ and 
a binomial $(n-d, \hbox{\rm beta}(d/2 , (d+2)/2 )$.
One can verify that this is in turn distributed as 
a binomial $(n-d, \hbox{\rm beta}(d/2 , d/2) )$.  If we wish to consider the \emph{fraction} of points on one side of the hyperplane as $n\to\infty$ for some fixed $d$, the expression is even cleaner---the limiting distribution is simply beta$(\ceil{d/2},\ceil{d/2})$ (see, \eg, Devroye \cite[Lem. 2]{devroye90m-ary} or King \cite[\S 5.2]{king2010thesis}).

\begin{figure}
\centering
\includegraphics[width=0.48\textwidth]{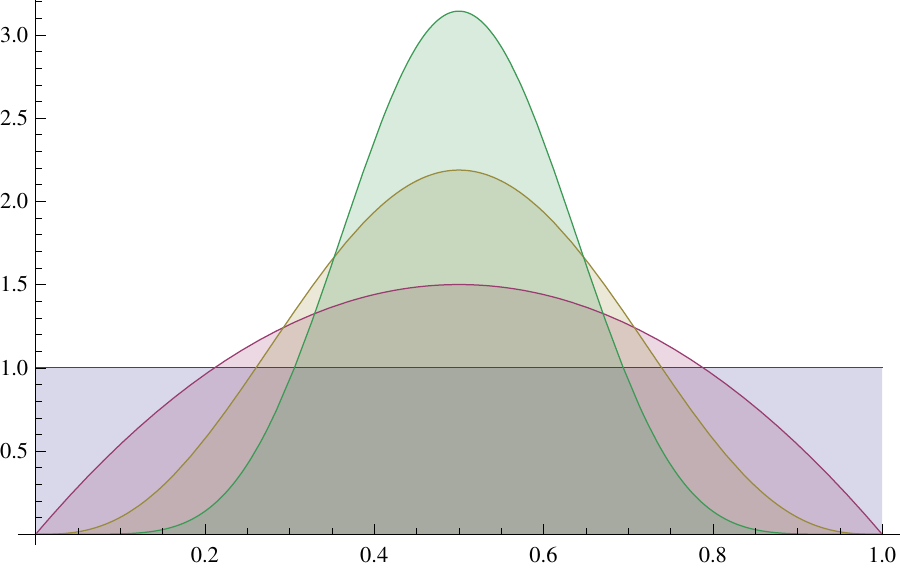}\hfill
\includegraphics[width=0.48\textwidth]{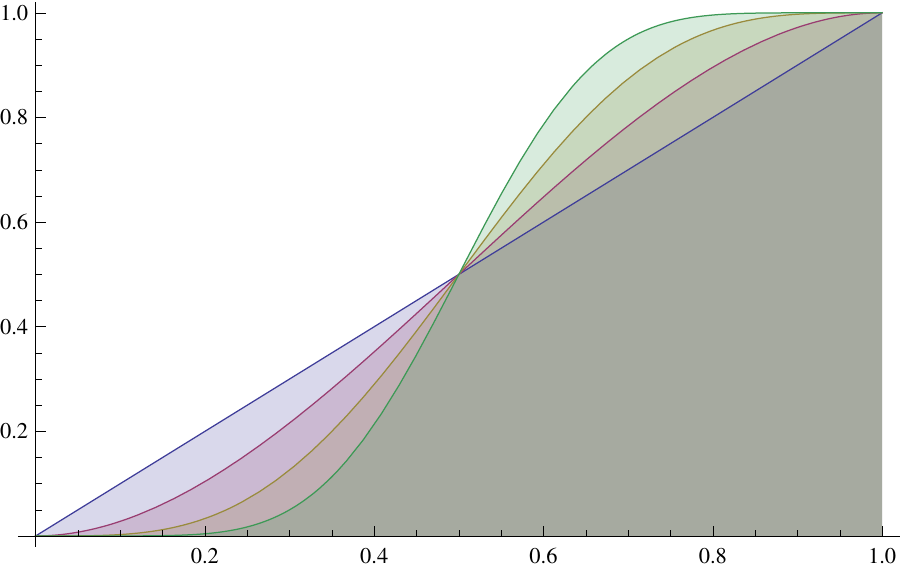}
\caption{\label{fig:beta}The PDFs (left) and CDFs (right) of the limiting split distribution beta$(\ceil{d/2},\ceil{d/2})$  for $d=1,3,7,15$.  The distribution is uniform when $d=1$ and becomes more tightly concentrated around $1/2$ as $d$ increases.}
\end{figure}

This tree is indistinguishable from the
fringe-balanced, or median-of-$(2t+1)$ search tree which has been studied
quite extensively in the data structure literature.
First suggested by Bell \cite{bell65} and Walker and Wood \cite{walker76},
it is a binary tree constructed on real-valued data.
It samples $2t+1$ data points uniformly without replacement
from the $n$ data points, where $t$ is an integer.
It then chooses the middle (median) element, and partitions
the remaining data points into two sets by using this median point.
Assuming without loss of generality that the data points are
$U_1,\ldots, U_n$, as above, we see that the leftmost set
in the split is precisely binomial $(n-(2t+1), \hbox{\rm beta} (t+1, t+1) )$.
Depending upon the implementation, the $2t$ unused pivot points
can  also be reused in the partition, thus inflating the subtree
sizes by $t$ each. For first-order asymptotics, this is an irrelevant
choice. If they are not reused, then the median-of-$(2t+1)$ tree
is distributed as the hyperplane search tree for the moment curve
if we take odd $d = 2t+1$. As we observed above, the moment curve
hyperplane search tree for $d=2t+2$ is nearly identical, i.e., at
least the beta components are of identical parameters.
Thus, we will only consider odd $d$.
 
\begin{figure}
\centering
\includegraphics{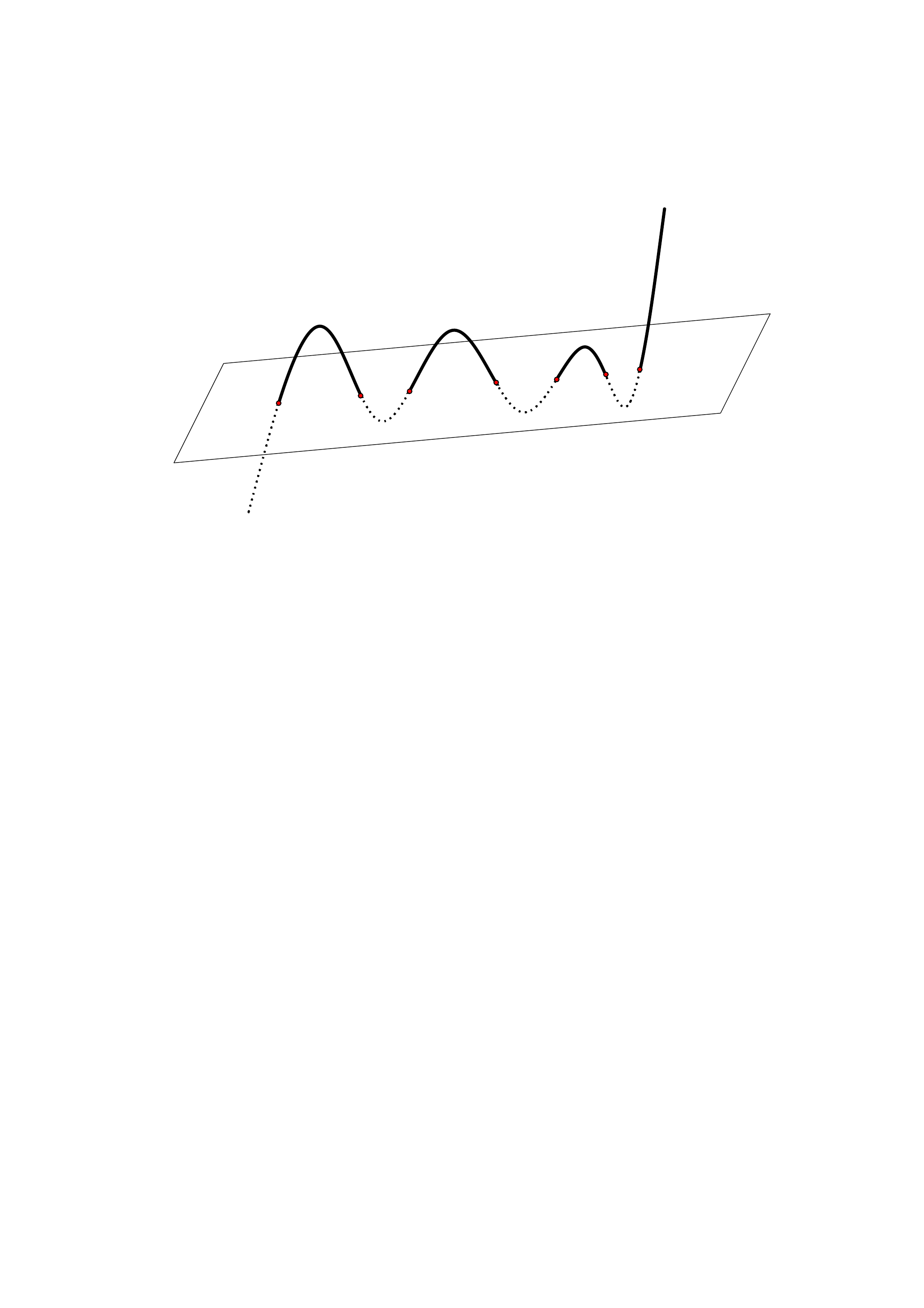}
\caption{A conceptual visualization of splits caused by choosing random points on the moment curve data set. The data are alternating above and below the hyperplane through the chosen points.}
\end{figure}

\subsection{A lower bound for the depth}

The depth $D_n$ has been studied 
by the theory of Markov processes or urn models in a series of papers, notably 
by Poblete and Munro \cite{poblete85}, Aldous et al.\ \cite{aldous88}. See also Gonnet
and Baeza-Yates \cite[p.~109]{gonnet91} and Devroye \cite{devroye99}, where
a central limit theorem for $D_n$ can be found.
Poblete and Munro \cite{poblete85} showed that 
$$
\frac{D_n}{\log n} \to \frac{1}{\sum_{i=t+1}^{2t+1} \frac{1}{i+1} } \isdef \Lambda (t)
~\hbox{\rm in probability}.
$$
Here we give a clean lower bound for $\Lambda (t)$ that proves the second half of Theorem \ref{thm:depth}.
\begin{proposition}\label{prop:moment_curve_depth}
For all $t$ sufficiently large,
$$\frac{D_n}{\log n} \geq \frac{1}{\log 2} + \frac{\log (3/2)}{4t}.$$
\end{proposition}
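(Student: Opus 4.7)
The plan is to reduce the statement to a purely analytic inequality about harmonic numbers and then verify it by a short Taylor expansion. By the Poblete and Munro result already cited, $D_n/\log n \to \Lambda(t) = 1/H(t)$ in probability, where
$$H(t) = \sum_{i=t+1}^{2t+1} \frac{1}{i+1} = H_{2t+2} - H_{t+1}$$
and $H_k$ denotes the $k$th harmonic number. Hence it suffices to prove the deterministic inequality $\Lambda(t) \geq 1/\log 2 + \log(3/2)/(4t)$ for all $t$ sufficiently large.

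First I would expand $H(t)$ via the standard asymptotic $H_k = \log k + \gamma + 1/(2k) + O(k^{-2})$. The Euler constants cancel and
$$H(t) = \log\frac{2t+2}{t+1} + \frac{1}{2(2t+2)} - \frac{1}{2(t+1)} + O(t^{-2}) = \log 2 - \frac{1}{4(t+1)} + O(t^{-2}).$$
The crucial negative correction $-1/(4(t+1))$ is the gap between the sum and its continuous approximation $\int_{t+1}^{2t+2} dx/x = \log 2$, and is the source of the $1/t$ improvement over $1/\log 2$. Inverting with the identity $1/(a-\delta) = 1/a + \delta/a^2 + O(\delta^2)$ applied at $a = \log 2$ and $\delta = 1/(4(t+1)) + O(t^{-2})$ then yields
$$\Lambda(t) = \frac{1}{\log 2} + \frac{1}{4(t+1)(\log 2)^2} + O(t^{-2}).$$

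It remains to compare coefficients. Numerically $(\log 2)^2 \log(3/2) \approx 0.480 \cdot 0.405 \approx 0.195 < 1$, so $1/[4(\log 2)^2]$ strictly exceeds $\log(3/2)/4$ by a definite positive constant, and this constant gap comfortably absorbs both the $O(t^{-2})$ remainder and the harmless replacement of $1/(t+1)$ by $1/t$ for all large enough $t$. There is no real obstacle: the entire argument is a one-line Taylor expansion of the harmonic numbers followed by a one-line inversion. The only mild point of care is making sure the second-order error does not consume the leading $1/t$ correction, but the roughly $5{:}1$ ratio between the obtained coefficient $1/[4(\log 2)^2] \approx 0.52$ and the target coefficient $\log(3/2)/4 \approx 0.10$ makes this automatic, and incidentally shows that the constant $\log(3/2)/4$ in the proposition is chosen for cleanness rather than tightness.
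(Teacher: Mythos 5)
Your proposal follows the same route as the paper: reduce to the limit $\Lambda(t) = 1/\big(\sum_{i=t+1}^{2t+1} 1/(i+1)\big) = 1/(H_{2t+2}-H_{t+1})$ via Poblete--Munro, expand the harmonic numbers to get $H_{2t+2}-H_{t+1} = \log 2 - 1/(4t) + \BIGO{1/t^2}$, and invert. Your inversion is in fact more careful than the paper's: the Taylor step $1/(\log 2 - \delta) = 1/\log 2 + \delta/(\log 2)^2 + \BIGO{\delta^2}$ gives the coefficient $1/\big(4(\log 2)^2\big) \approx 0.52$ as you state, whereas the paper writes $\log(2)/(4t)$ at this step (an apparent slip, since $1/(\log 2)^2 \neq \log 2$); both values comfortably exceed the conservatively chosen $\log(3/2)/4 \approx 0.10$, so the proposition follows in either case, and your explicit numerical check $(\log 2)^2 \log(3/2) < 1$ is exactly the right comparison.
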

\begin{proof}
We know that the $n$th partial sum in the harmonic series is
$$
\log n + \gamma + {1 \over 2n} - {1 \over 12n^2} + { 1 \over 120n^4} + \cdots,
$$
where $\gamma = 0.57721\ldots$ is the Euler-Mascheroni constant.
Thus, the limit of $D_n / \log n$ is
$$
{ 1 \over \log \left( {2t+2 \over t+1} \right) + {1 \over 4t+2} - { 1 \over 2t+2 } + \BIGO{ { 1 \over t^2 } } }
= { 1 \over \log ( 2 ) - {1 \over 4t} + \BIGO{ 1 \over t^2 } }
= { 1 \over \log ( 2 ) } +   {\log (2) \over 4t} + \BIGO{ 1 \over t^2 }.
$$
For odd $d$, this is $1/\log (2) + \log(2)/(2d) + \BIGO{1/d^2}$, thus proving the second half of Theorem \ref{thm:depth}.
\end{proof}

The law of large numbers for the height
is due to Devroye (1993). We have
$$
{H_n \over \log n } \to C(t) \text{~in probability},
$$
where $C(t)$ is the unique solution $c$ greater than $\Lambda (t)$ of the equation
$$
\lambda (c)  - c \sum_{i=t+1}^{2t+1} \log \left( 1 + {\lambda (c) \over i } \right) + c \log 2 = 0,
$$
and $\lambda (c)$ is defined by the implicit equation
$$
{1 \over c} = \sum_{i=t+1}^{2t+1} { 1 \over \lambda + i }.
$$
We have $C(t) \to 1/\log 2$ as $t \to \infty$. 
A table of numerical values is given in Devroye (1993).
For example, for the moment curve in dimensions 1 and 2, the behavior is as
for random binary search trees:
$D_n / \log n \to 2$ in probability
and $H_n / \log n \to 4.31107\ldots$ in
probability.
In dimension 3, we have a beta$(2,2)$ parameter in the split vector,
and obtain
$D_n / \log n \to 12/7$ in probability
and $H_n / \log n \to 3.19257\ldots$ in
probability. For $d=2$, this is optimal as shown by
Devroye, King and McDiarmid \cite{devroye09}. For $d=3$, the moment curve
yields indeed the worst point configuration, thanks to a result
of Welzl \cite{welzl01}.

To show the last part of Theorem \ref{thm:height}, we need to show
that $C(t) \ge 1/\log (2) + c/\sqrt{t}$ for some positive constant $c$ and all $t$ large enough. 

\begin{proposition}\label{prop:moment_curve_height}
For any constant $c < \sqrt{\log (2)}$ and all $t$ sufficiently large,
$$C(t) \geq \frac{1}{\log 2} + \frac{c}{\sqrt{t}}.$$
\end{proposition}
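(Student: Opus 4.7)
My plan is to use the implicit characterization of $C(t)$. Define
$$f(c) \isdef \lambda(c) - c\sum_{i=t+1}^{2t+1}\log\left(1 + \lambda(c)/i\right) + c\log 2,$$
where $\lambda(c)$ is determined by $1/c = \sum_{i=t+1}^{2t+1} 1/(\lambda+i)$. Implicit differentiation yields $f'(c) = \log 2 - \sum \log(1+\lambda(c)/i)$, which is positive when $\lambda$ is small and negative when $\lambda$ is large, so $f$ is unimodal on $[\Lambda(t),\infty)$. Since $f(\Lambda(t)) = \Lambda(t)\log 2 > 0$ and $C(t)$ is the unique zero of $f$ on this interval, $f(c) > 0$ for every $c \in [\Lambda(t), C(t))$. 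Writing $c_0 < \sqrt{\log 2}$ for the constant in the statement and setting $c^* \isdef 1/\log 2 + c_0/\sqrt{t}$, we have $c^* > \Lambda(t) = 1/\log 2 + \BIGO{1/t}$ for $t$ large, so it suffices to show $f(c^*) > 0$ for all $t$ sufficiently large.

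The heart of the argument is an asymptotic expansion. Using $\sum_{i=t+1}^{2t+1} 1/i = \log 2 + \BIGO{1/t}$ and $\sum_{i=t+1}^{2t+1} 1/i^2 = 1/(2t) + \BIGO{1/t^2}$, together with $1/c^* = \log 2 - c_0(\log 2)^2/\sqrt{t} + \BIGO{1/t}$, matching the expansion $\sum 1/(i+\lambda) = \log 2 - \lambda/(2t) + \BIGO{1/t + \lambda^2/t^2}$ forces $\lambda^* \isdef \lambda(c^*) = 2c_0(\log 2)^2 \sqrt{t} + \BIGO{1}$. Since then $\lambda^*/i = \BIGO{1/\sqrt{t}}$ uniformly in $i$, a second-order Taylor expansion gives
$$\sum_{i=t+1}^{2t+1} \log(1+\lambda^*/i) = \lambda^*\log 2 - \tfrac{1}{2}(\lambda^*)^2 \sum 1/i^2 + \BIGO{1/\sqrt{t}} = \lambda^*\log 2 - c_0^2(\log 2)^4 + \BIGO{1/\sqrt{t}}.$$
Multiplying by $c^*$, the leading $\lambda^*$ term cancels, the correction $c_0/\sqrt{t} \cdot \lambda^*\log 2$ contributes $2c_0^2(\log 2)^3$, the $-c^* c_0^2(\log 2)^4$ piece contributes $-c_0^2(\log 2)^3$, and $c^*\log 2 = 1 + \BIGO{1/\sqrt{t}}$. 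Collecting,
$$f(c^*) = 1 - c_0^2(\log 2)^3 + \BIGO{1/\sqrt{t}}.$$

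Since $c_0^2 < \log 2 < 1$, the leading constant is at least $1 - (\log 2)^4 > 0$, so $f(c^*) > 0$ for all $t$ large enough, which yields $C(t) > c^*$ as required. The main obstacle is purely bookkeeping of error terms: one must verify that the Taylor truncations stay $\BIGO{1/\sqrt{t}}$ when $\lambda$ grows like $\sqrt{t}$. This goes through because $\lambda^*/i = \BIGO{1/\sqrt{t}}$ uniformly, so each cubic remainder of the summands is $\BIGO{1/t^{3/2}}$, and summing over $\Theta(t)$ indices produces a clean $\BIGO{1/\sqrt{t}}$ total error.
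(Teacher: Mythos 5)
Your proof is correct, and it takes a genuinely different route from the paper's. The paper plugs the reparametrization $\lambda = \alpha\sqrt{t}$, $1/c = \log 2 - \beta/\sqrt{t}$ into both defining equations for $C(t)$, and then extracts the limits $\alpha \to \sqrt{\log 16}$ and $\beta \to \sqrt{\log 2}$. That gives a two-sided asymptotic characterization of $C(t)$. You instead exploit structure: since $f'(c) = \log 2 - \sum_{i=t+1}^{2t+1}\log(1+\lambda(c)/i)$ is strictly decreasing, $f$ is concave, so $f(c^*) > 0$ with $c^* \in (\Lambda(t), \infty)$ forces $c^* < C(t)$. This reduces a fixed-point analysis to a one-sided evaluation, which is cleaner given that the proposition only needs a lower bound. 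The asymptotic bookkeeping that follows (computing $\lambda(c^*) = 2c_0(\log 2)^2\sqrt{t} + \BIGO{1}$, then expanding $\sum\log(1+\lambda^*/i)$) is in the same spirit as the paper's expansions and checks out: cubic remainders are $\BIGO{t^{-3/2}}$ per term, $\BIGO{t^{-1/2}}$ total, and the leading term of $f(c^*)$ comes out to $1 - c_0^2(\log 2)^3$. In fact your bound is sharper than advertised: positivity holds for all $c_0 < (\log 2)^{-3/2}$, of which $c_0 < \sqrt{\log 2}$ is a strict sub-range (and indeed $(\log 2)^{-3/2}$ is the true coefficient implicit in the paper's $\beta \to \sqrt{\log 2}$, after converting $1/c = \log 2 - \beta/\sqrt{t}$ to an expansion for $c$). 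One small slip: at $c = \Lambda(t)$ we have $\lambda(\Lambda(t)) = 1$, not $0$, giving $f(\Lambda(t)) = 1 - \Lambda(t)\sum\log(1+1/i) + \Lambda(t)\log 2 = 1 - \Lambda(t)\log 2 + \Lambda(t)\log 2 = 1$, not $\Lambda(t)\log 2$; but this is still positive so your concavity argument is unaffected.
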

\begin{proof}

We reparametrize with respect to $t$ as follows.
Define $\lambda = \alpha \sqrt{t}$ and $1/c = \log (2) - \beta/\sqrt{t}$.
We plug this back into the definitions of $\lambda$ and $c$,
and note that it suffices to show that as $t \to \infty$, $\beta$ tends
to a positive constant.
First note that
\begin{align*}
\sum_{i=t+1}^{2t+1} { 1 \over \lambda + i }
&= \log \left( { \lambda + 2t + 1 \over \lambda + t } \right) + \BIGO{\frac{1}{t}} \cr
&= \log (2) + \log \left( { \lambda + 2t + 1 \over 2\lambda + 2t } \right) + \BIGO{\frac{1}{t}} \cr
&= \log (2) + \log \left( 1 - { \lambda + 1 \over 2\lambda + 2t } \right) + \BIGO{\frac{1}{t}} \cr
&= \log (2) - { \lambda + 1 \over 2\lambda + 2t } + \BIGO{\frac{1}{t}} \cr
&= \log (2) - { \alpha \over 2 \sqrt{t} } + \BIGO{\frac{1}{t}}. \cr
\end{align*}
Thus, $| \alpha / 2  - \beta | = \BIGO{1/\sqrt{t}}$.
The second equation relating $\lambda$ to $c$ can be rewritten
$$
{\lambda \over c \log (2) } + 1 - {1 \over \log 2} \sum_{i=t+1}^{2t+1} \log \left( 1 + {\lambda \over i } \right)  = 0.
$$
With the reparametrization, and dividing by $\sqrt{t}$, this yields
$$
\alpha \left(1 - {\beta \over \log (2) \sqrt{t}} \right) + {1 \over \sqrt{t}} - {1 \over \log (2) \sqrt{t} } \sum_{i=t+1}^{2t+1} \log \left( 1 + {\alpha \sqrt{t} \over i } \right)  = 0.
$$
Assuming $\alpha$ remains bounded, the last term is 
\begin{align*}
&{1 \over \log (2) \sqrt{t} } \sum_{i=t+1}^{2t+1} {\alpha \sqrt{t} \over i } - {1 \over \log (2) \sqrt{t} } \sum_{i=t+1}^{2t+1} { \alpha^2 t \over 2i^2 } + \BIGO{\frac{1}{t^2}} \cr
&\qquad= {\alpha \over \log (2) } \left( \log (2) - {1 \over 4t} + \BIGO{\frac{1}{t^2}} \right) - {\alpha^2 \sqrt{t} \over \log (2) } \sum_{i=t+1}^{2t+1} { 1 \over 2i^2 } + \BIGO{\frac{1}{t^2}} \cr
&\qquad= \alpha - {\alpha \over 4t \log 2 } - {\alpha^2 \over 4 \sqrt{t} \log (2) } + \BIGO{\frac{1}{t^{3/2}}}. \cr
\end{align*}
Putting things together, our equation becomes
$$
- {\alpha \beta \over \log (2) \sqrt{t}}  + {1 \over \sqrt{t}}  + {\alpha^2 \over 4 \sqrt{t} \log (2) } + \BIGO{\frac{1}{t}} = 0.
$$
The main term is $o(1/\sqrt{t})$ if
$$
\alpha \beta - \log (2)  - {\alpha^2 \over 4}  = o(1).
$$
This happens if $\alpha \to \sqrt{ \log (16) }$,
and thus $\beta \to \sqrt{\log (2)}$.
\end{proof}

\section{Simple balance lemmas}\label{sec:auxiliary}

We need some preliminary results on polytopes.  In particular, for a polytope $P$ of $\RR^d$ with $n$ vertices, let $f_k (P)$ denote the number of $k$-faces.  A special place is occupied by $\Cnd$, the cyclic polytope in $\RR^d$ having $n$ vertices.  As a canonical example of such a polytope we can consider the convex hull of the points $\big\{ (t^1, t^2, \ldots, t^d): t = 1,2,\ldots,n \big\}$.

McMullen's Upper Bound Theorem (McMullen, 1970; McMullen and Shephard, 1971) \cite{mcmullen70, mcmullen71convex}
states that for all $1 \le k \le d-1$,
$$
\max_P f_k (P) = f_k (\Cnd).
$$
For more on this, and alternate proofs, see, \eg, Mulmuley (1994) \cite{mulmuley94},
Ziegler (1995) \cite{ziegler95}, or Kalai (1997) \cite{kalai97}.
Exact expression are well-known for $f_k (\Cnd)$. The one that is of most 
interest to us is
$$
f_{d-1} (\Cnd) = \binom{n -  \left\lfloor \frac{d+1}{2} \right\rfloor}{n-d} 
            + \binom{n -  \left\lfloor \frac{d+2}{2} \right\rfloor}{n-d} .
$$
This counts the number of full $(d-1)$-dimensional faces (\ie, facets) of $\Cnd$.
For example, when $n=d+2$, one can readily verify these formulas:
$$
f_{d-1} (\Cnd) = 
\begin{cases}
\frac{(d+2)^2}{4} & \text{when $d$ is even ,} \\
\frac{(d+1)(d+3)}{4} & \text{when $d$ is odd.} \\
\end{cases}
$$
We also note (see, \eg, Gr\"unbaum (2003) \cite{grunbaum03}) that if we are given $n = d+2$ points in convex position and in general position, then their convex hull $P$ is a simplicial polytope with $n = d+2$ vertices.  Such a polytope must be combinatorially equivalent to $\Cnd$: in particular, $f_k (P) = f_k (\Cnd)$ for all $1 \le k \le d-1$.  The following lemma is fundamental.

\begin{lemma}[The small balance lemma]\label{lem:small_balance}
Consider $d+2$ points in general position in $\RR^d$.  Let $H$ be the hyperplane through $d$ of them, chosen uniformly at random. Let $A$ be the event that the two remaining points are on the same side of $H$. Then
$$
\PROB{ A } \le \frac{1}{2} +  \frac{1}{2(d+1)}.
$$
\end{lemma}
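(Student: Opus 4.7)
My plan is to reduce the probability to a facet count. Since the $d$ pivot points are chosen uniformly without replacement from the $d+2$ points, $\PROB{A}$ equals $\mathcal{F}/\binom{d+2}{2}$, where $\mathcal{F}$ is the number of $d$-subsets whose affine hull has the remaining two points in a common open half-space. It thus suffices to prove $\mathcal{F} \le (d+2)^2/4$, since $(d+2)^2/4 \cdot 2/((d+2)(d+1)) = (d+2)/(2(d+1)) = 1/2 + 1/(2(d+1))$.

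I would split into two cases by the position of the $d+2$ points. General position forces the convex hull to be full dimensional, hence to have either $d+1$ or $d+2$ vertices. In the generic case (all $d+2$ points in convex position), let $P$ be their convex hull; $P$ is a simplicial polytope with $d+2$ vertices. I claim that a hyperplane $H$ through $d$ of the points has the other two on the same side if and only if those $d$ points form a facet of $P$. One direction is immediate: a facet's supporting hyperplane has all other vertices of $P$ strictly on the interior side. For the converse, if the two remaining points lie on one side of $H$, then $H$ is a supporting hyperplane of $P$, and $P \cap H$ is a face of $P$ containing the $d$ chosen (affinely independent) points; general position prevents any additional vertex from lying on $H$, so this face is exactly the $(d-1)$-simplex spanned by those $d$ points, hence a facet. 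Therefore $\mathcal{F} = f_{d-1}(P)$, and McMullen's Upper Bound Theorem (combined with the fact, already recalled in the excerpt, that any simplicial polytope on $d+2$ vertices is combinatorially $\Cnd$ with $n=d+2$) gives $\mathcal{F} = f_{d-1}(\mathcal{C}_{d+2,d}) \le (d+2)^2/4$.

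In the remaining case, the convex hull is a $d$-simplex with vertices $p_1, \ldots, p_{d+1}$ and $p_{d+2}$ is an interior point. The $d$-subsets split into two families. The $\binom{d+1}{d} = d+1$ subsets consisting entirely of simplex vertices are exactly the facets of the simplex; each contributes a same-side event (with $p_{d+2}$ and the missing simplex vertex both on the interior side). For a subset containing $p_{d+2}$ together with $d-1$ simplex vertices spanning a $(d-2)$-face $F$, I would rotate the hyperplane about $F$: the two facets of the simplex through $F$ bound the range of rotations whose hyperplane meets the simplex interior, and $p_{d+2}$ being strictly interior forces its hyperplane to lie strictly between them, hence to strictly separate the two omitted simplex vertices. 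So these $\binom{d+1}{2}$ subsets contribute zero, and $\mathcal{F} = d+1 \le (d+2)^2/4$ trivially.

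The main technical step is the facet characterization in the convex position case, which is where general position is crucial (to rule out a $(d+1)$st point lying accidentally on the supporting hyperplane and thereby producing a non-simplicial face). Everything else is bookkeeping: the interior-point case is easy, and once $\mathcal{F}$ is recognized as a facet count, the Upper Bound Theorem plugs in the sharp estimate $(d+2)^2/4$ and the arithmetic delivers the stated constant $1/2 + 1/(2(d+1))$.
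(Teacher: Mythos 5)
Your proof is correct and follows essentially the same route as the paper: split on whether the $d+2$ points are in convex position, identify same-side events with facets, and use the combinatorial equivalence with $\mathcal{C}_{d+2,d}$ (plus the known $f_{d-1}$ value) in the convex case and a direct count in the interior-point case. You supply slightly more detail than the paper (the ``if and only if facet'' characterization and the rotation argument for why including the interior pivot always separates the other two vertices), and you use the clean uniform bound $f_{d-1}(\mathcal{C}_{d+2,d}) \le (d+2)^2/4$ in place of the paper's exact even/odd formula, which still yields the stated constant exactly.
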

\begin{proof}
If the points are in convex position as well, then by remarks from
the previous section (combinatorial equivalence with  the cyclic polytope
$\Cnd$ for $n=d+2$), we see that

\begin{eqnarray*}
\PROB{ A } 
&=& { \frac{f_{d-1} (\Cnd)}{\binom{d+2}{d}}} \\
&=& 
\begin{cases} 
  { \frac{d+2}{2(d+1)}} & \text{when $d$ is even,} \\
  { \frac{d+3}{2(d+2)}} & \text{when $d$ is odd} \\
\end{cases} \\
&=& 
\begin{cases} 
 \frac{1}{2} +  \frac{1}{2(d+1) } & \text{when $d$ is even,} \\
 \frac{1}{2} +  \frac{1}{2(d+2) } & \text{when $d$ is odd.} \\
\end{cases} \\
\end{eqnarray*}
If the points are not in convex position, then $d+1$ of them
form a simplex, and one point is strictly inside it.
We see that $\PROB{ A }$ now equals the probability that
if we choose $d$ points uniformly at random, we fail to pick that
interior point. Thus,
$$
\PROB{ A } = { 2 \over d+2 }.
$$
Combining cases, we note that 
$$
\PROB{ A } \le {1 \over 2} +  { 1 \over 2(d+1) }.
$$
\end{proof}

Even though it is very simple, we already note that hyperplane splits in sets as small as $d+2$ are roughly balanced for large $d$.  Next, we derive an inequality for hyperplane splits for general $n > d$.

\begin{lemma}[The balance lemma]\label{lem:balance}
Consider $n \ge d+1$ points in general position in $\RR^d$.  Let $H$ be the hyperplane through $d$ of them, chosen uniformly at random. This splits the remaining $n-d$ points into two sets, $S$ and $S'$.  Let $N = |S| \xi + |S'| (1-\xi)$, where $\xi \in \{ 0, 1 \}$ is $\BERNOULLI{1/2}$.  Then, for $x \ge 0$,
$$
\PROB{ N \ge {n-d \over 2} + x } 
~\le~ { {n-d \over 4} + {(n-d)^2 \over 4(d+1)} \over {n-d \over 4} + {(n-d)^2 \over 4(d+1)} + x^2 }~.
$$
\end{lemma}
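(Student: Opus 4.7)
The plan is to combine two observations: first, the independent $\mathrm{Bernoulli}(1/2)$ variable $\xi$ makes $\EXP{N}=(n-d)/2$ independent of the geometry of the split, which lets us convert a variance bound into a clean one-sided tail bound via Cantelli's inequality; second, the variance of $N$ reduces to the expected squared imbalance of the split, and this can be controlled by counting same-side pairs and invoking the small balance lemma (Lemma \ref{lem:small_balance}).

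First I would compute $\EXP{N}$ and $\VAR{N}$ by conditioning on the random hyperplane $H$. Writing $m = n-d$ and $a = |S|$, $b = |S'| = m-a$, the conditional law of $N$ given $(a,b)$ is a two-point distribution with mean $m/2$ and variance $(a-m/2)^2 = m^2/4 - ab$. By the law of total variance,
$$
\VAR{N} \,=\, \EXP{(a-m/2)^2} \,=\, \tfrac{m^2}{4} - \EXP{ab}.
$$
So everything reduces to bounding $\EXP{ab}$ from below, equivalently, bounding the expected number of same-side pairs $X = \binom{a}{2} + \binom{b}{2} = \binom{m}{2} - ab$ from above.

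Next I would bound $\EXP{X}$ by double counting. For each $(d+2)$-subset $V$ of the $n$ data points and each $d$-subset $T \subset V$, consider the indicator that the two points $V \setminus T$ lie on the same side of $\mathrm{aff}(T)$. The small balance lemma asserts precisely that, summing over $T \subset V$ with $|T|=d$, the total is at most $\binom{d+2}{d}\bigl(\tfrac{1}{2}+\tfrac{1}{2(d+1)}\bigr)$. Summing this over all $\binom{n}{d+2}$ choices of $V$, then dividing by the total number $\binom{n}{d}$ of hyperplane choices and using the identity $\binom{n}{d+2}\binom{d+2}{d}/\binom{n}{d} = \binom{n-d}{2}$, yields
$$
\EXP{X} \,\le\, \binom{m}{2}\left(\tfrac{1}{2}+\tfrac{1}{2(d+1)}\right).
$$

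Finally I would assemble the pieces. Substituting the bound on $\EXP{X}$ into the formula for $\VAR{N}$ and simplifying gives
$$
\VAR{N} \,\le\, \tfrac{m^2+md}{4(d+1)} \,\le\, \tfrac{m}{4} + \tfrac{m^2}{4(d+1)},
$$
and then Cantelli's (one-sided Chebyshev) inequality applied to $N - \EXP{N}$, together with the monotonicity of $\sigma^2/(\sigma^2+x^2)$ in $\sigma^2$, produces the stated tail bound. The one delicate step is the double counting in the $\EXP{X}$ calculation: the small balance lemma lives on a single $(d+2)$-subset, so the main obstacle is verifying that summing over $(V,T)$ pairs and then swapping the order of summation yields exactly the right combinatorial identity linking $\binom{n}{d+2}\binom{d+2}{d}/\binom{n}{d}$ to $\binom{n-d}{2}$; everything else is algebra and a routine appeal to Cantelli.
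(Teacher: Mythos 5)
Your proposal is correct and takes essentially the same route as the paper: both proofs reduce $\VAR{N}$ to the expected number of same-side pairs, apply Lemma~\ref{lem:small_balance} on $(d+2)$-subsets, and finish with Cantelli. The only cosmetic difference is that you compute the conditional variance $(a - m/2)^2 = m^2/4 - ab$ directly and bound $\EXP{X}$ by explicit double counting over pairs $(V,T)$, whereas the paper introduces $\pm 1$ indicators $Y(x_i,H)$, expands $\EXP{(\sum Y)^2}$ into diagonal and off-diagonal terms, and conditions a random off-diagonal pair on the $(d+2)$-set $H \cup \{x_Z, x_W\}$ — the two are equivalent reformulations of the same step, and your combinatorial identity $\binom{n}{d+2}\binom{d+2}{d}/\binom{n}{d}=\binom{n-d}{2}$ checks out.
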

\begin{proof}
When $n=d+1$, then the upper bound is more than $1/2$ when $x \le (n-d)/2$. For $x > (n-d)/2$, the left-hand-side is zero. So assume $n \ge d+2$.  Let $H$ denote the random set of $d$ points (instead of the hyperplane that passes through them). For $x_i \not\in H$, let $A(x_i, H)$ denote the event that $x_i$ is at the same side of $H$ as the origin (or any other arbitrary fixed point in general position with the others). Set $Y(X_i, H) = 1$ if $A(x_i, H)$ is true and $Y(X_i, H) = -1$ otherwise. Thus,
$$
N 
~=~ 
\xi \sum_{i: x_i \not\in H }  { Y(X_i, H) + 1 \over 2 } + (1-\xi) { -Y(X_i, H) + 1 \over 2 }
~=~ 
{n-d \over 2} + (\xi-1/2)   \sum_{i: x_i \not\in H } Y(X_i, H).
$$
Thus, $\EXP{ N } = (n-d)/2$, and
\begin{eqnarray*}
\VAR{N}
&=& {1 \over 4} \EXP{ \left( \sum_{i: x_i \not\in H } Y(X_i, H) \right)^2 } \\
&=& {1 \over 4} \EXP{ \sum_{i: x_i \not\in H } Y^2(X_i, H) } + { 1 \over 4 } \EXP{ \sum_{i\not=j: x_i \not\in H, x_j \not\in H } Y(x_i, H) Y(x_j, H) } \cr
&=& {n-d \over 4} + { (n-d)(n-d-1) \over 4 } \EXP{ Y(x_Z, H) Y(x_W, H) }~,
\end{eqnarray*}
where $W,Z$ are randomly drawn without replacement from $\{ x_1,\ldots, x_n\} \setminus H$.  Continuing, 
\begin{eqnarray*}
\VAR{N}
&=& {n-d \over 4} + { (n-d)(n-d-1) \over 4 } \left(\Big. 2\cdot\PROB{\big. x_Z, x_W \hbox{\rm ~are on same side of } H } - 1 \right)~.
\end{eqnarray*}
Now, after first conditioning on the set $H \cup \{x_Z , x_W\}$, which has cardinality $d+2$, Lemma \ref{lem:small_balance} gives us
\begin{eqnarray*}
\VAR{N}
&\le& {n-d \over 4} + { (n-d)(n-d-1) \over 4(d+1) } \cr
&\le& {n-d \over 4} + { (n-d)^2 \over 4(d+1) }~.
\end{eqnarray*}
By the Chebyshev-Cantelli inequality, we have
$$
\PROB{ N \ge {n-d \over 2} + x } \le { \VAR{ N } \over \VAR{ N } + x^2 }.
$$
Plugging in the upper bound on $\VAR{ N }$ gives the result. 
\end{proof}

It is convenient to have a simpler bound than that of Lemma \ref{lem:balance}, in which the sample size $n$ is removed.  For example, this suffices for our main result:

\begin{lemma}[The simplified balance lemma]\label{lem:simplified}
With notation from Lemma \ref{lem:balance}, for $x > 1/2$ we have
$$
\PROB{ {N \over n}  \ge  x  } \le \min \left(\, {1 \over 2}\, , \,{ 1 \over 1 + 4 (d+1) (x-1/2)^2}\, \right).
$$
\end{lemma}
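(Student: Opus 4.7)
The plan is to derive both halves of the $\min$ from Lemma~\ref{lem:balance} by an appropriate substitution and a short symmetry argument.

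First I would rewrite the event $\{N/n \ge x\}$ as $\{N \ge (n-d)/2 + y\}$ where $y \isdef nx - (n-d)/2 = n(x - 1/2) + d/2$. For $x > 1/2$ this $y$ is positive, and moreover $y \ge n(x-1/2)$ since $d \ge 0$. The next step is to simplify the variance-type quantity
$$V \isdef \frac{n-d}{4} + \frac{(n-d)^2}{4(d+1)} = \frac{(n-d)(n+1)}{4(d+1)},$$
which after collecting over the common denominator telescopes cleanly. Using $(n-d)(n+1) \le n^2$ (valid for $d \ge 1$, which is forced since we have a hyperplane split), Lemma~\ref{lem:balance} yields
$$\PROB{N/n \ge x} \;\le\; \frac{V}{V + y^2} \;=\; \frac{1}{1 + y^2/V} \;\le\; \frac{1}{1 + \frac{4(d+1)\,n^2(x-1/2)^2}{n^2}} \;=\; \frac{1}{1 + 4(d+1)(x-1/2)^2},$$
which is the second term in the $\min$.

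For the first term, I would exploit the fact that the Bernoulli flip $\xi$ makes the distribution of $N$ symmetric about $(n-d)/2$: indeed, $N$ and $n-d-N$ are identically distributed. Consequently, for any $m > (n-d)/2$, the inequality $\PROB{N \ge m} = \PROB{N \le n-d-m} \le \PROB{N < m}$ forces $\PROB{N \ge m} \le 1/2$. Since $x > 1/2$ gives $nx > n/2 \ge (n-d)/2$, applying this with $m = nx$ yields $\PROB{N/n \ge x} \le 1/2$.

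I do not anticipate a real obstacle here; the only subtlety is bookkeeping the lower bound $y \ge n(x-1/2)$ (dropping the nonnegative $d/2$) and the corresponding upper bound $V \le n^2/(4(d+1))$ so that the $n$-dependence cancels in the final estimate. Combining the two bounds by taking their minimum gives the stated inequality.
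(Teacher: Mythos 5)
Your proof is correct, and it follows the same overall route as the paper: substitute into Lemma~\ref{lem:balance}, simplify the variance expression, and use the $\xi$-symmetry of $N$ about $(n-d)/2$ for the trivial $1/2$ bound. However, your bookkeeping is in fact cleaner than the paper's, and repairs a small slip there. The paper lower-bounds the shift by $(n-d)(x-1/2)$ rather than by $n(x-1/2)$ as you do, and then divides through by $n-d$; this leads to the intermediate quantity
$$
\frac{n+1}{\,n+1 + 4(d+1)(n-d)(x-1/2)^2\,},
$$
which the paper then claims is $\le \frac{n-d}{\,n-d + 4(d+1)(n-d)(x-1/2)^2\,}$. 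Since $n+1 > n-d$ and $t \mapsto t/(t+K)$ is increasing, that inequality is actually reversed (e.g.\ $n=10$, $d=2$, $x=1$ gives $11/35 > 8/32$), so the paper's chain does not close as written. Your version avoids this: retaining $y \ge n(x-1/2)$ and pairing it with $V = \frac{(n-d)(n+1)}{4(d+1)} \le \frac{n^2}{4(d+1)}$ makes the $n$'s cancel cleanly in $y^2/V$, yielding the stated bound without any appeal to a false inequality. Both halves of your argument (the $1/2$ symmetry bound and the Cantelli-derived bound) are sound.
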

\begin{proof}
The $1/2$ bound follows from the symmetry in the definition of $N$.  We begin by formally replacing $x$ in Lemma \ref{lem:balance} by $n(x-1/2)+d/2 = (n-d)(x-1/2) +dx $, and noting that this is $\ge (n-d)(x-1/2)$:
\begin{eqnarray*}
\PROB{ N \ge nx } 
&\le& { {n-d \over 4} + {(n-d)^2 \over 4(d+1)} \over {n-d \over 4} + {(n-d)^2 \over 4(d+1)} + (n-d)^2 (x-1/2)^2} \cr
&=& { {1 \over 4} + {n-d \over 4(d+1)} \over {1 \over 4} + {n-d \over 4(d+1)} + (n-d) (x-1/2)^2} \cr
&=& { {n+1 \over 4(d+1)} \over {n+1 \over 4(d+1)} + (n-d) (x-1/2)^2} \cr
&=& { n+1 \over n+1 + 4 (d+1) (n-d) (x-1/2)^2} \cr
&\le& { n-d \over n-d + 4 (d+1) (n-d) (x-1/2)^2} \cr
&=& { 1 \over 1 + 4 (d+1) (x-1/2)^2}. \cr
\end{eqnarray*}
\end{proof}

\section{Dominated binary trees}\label{sec:dominated}

We consider the following general set-up.
A tree with $1 \le n \le d$ data points is not split and consists
of  a single node, the root, which ``holds'' all data points.
if $n > d$, the root is split in some manner, resulting
in left and right subtree sizes $L, R$, satisfying
$L+R = n-d$, and $(L/n,R/n)$ stochastically dominated by
$(Z,1-Z)$, where $Z\in [0,1]$ is a given random variable
symmetric about $1/2$. By stochastic domination, we mean that
$$
\PROB{ \max (L/n, R/n) \ge x} \le \PROB{ \max (Z, 1-Z) n\ge x}, x \ge 0.
$$
From Marshall and Olkin (1979), we recall that for any convex
function $\psi$,
$$
\EXP{ \psi (L/n) + \psi (R/n)} \le \EXP{ \psi (Z)  + \psi (1-Z)} = 2 \EXP{ \psi (Z)}.
$$
This splitting property is recursively applied to each subtree,
and given a subtree size (like $L$), and given the data points that
are in the subtree, we require the inequality uniformly over all
point sets.
To save space, we say that we have a tree
\underbar{dominated by $Z$}.
Let us give two examples.

\begin{example}
In the random binary search tree, where $d=1$, we know that
$L \inlaw R \inlaw \lfloor nU \rfloor$, where $U$ is uniform $[0,1]$.
It is trivial to show that $(L/n,R/n)$
 is stochastically dominated by $(U, 1-U)$. 
 Thus the tree is dominated by $Z=U$.
\hfill$\square$
\end{example}
\begin{example}\label{ex:hyperplane}
The hyperplane search tree in $\RR^d$.
Let the largest of the two subtrees of the root have size $N$.
By the union bound and Lemma \ref{lem:simplified}, for $x > 1/2$,
$$
\PROB{ {N \over n}  \ge  x } \le { 2 \over 1 + 4 (d+1) (x-1/2)^2} < { 1 \over 2 (d+1) (x-1/2)^2}.
$$
Let $W \in [1/2,1]$ be a random variable with distribution function given by
$$
\PROB{ W \ge  x} 
= 
\begin{cases} 
 { 1 \over 2 (d+1) (x-1/2)^2} & \text{if $x \in \left[1/2 + \sqrt{1/2(d+1)} , 1 \right]$,} \\
       0 & \text{if $x > 1$.} \\ 
\end{cases}
$$
That is, $W$ is supported on $\left[1/2 + \sqrt{1/2(d+1)} , 1 \right]$ and has
an atom of weight $2/(d+1)$ at $1$.
It takes a moment to verify that
$$
W \inlaw {1 \over 2} + \min \left( {1 \over 2} , \sqrt{ 1 \over 2(d+1) U } \right).
$$
Thus, the $Z$ in the preceding discussion can be taken
$$ 
Z = {1 \over 2} + \sigma \min \left( {1 \over 2} , \sqrt{ 1 \over 2(d+1) U } \right),
$$
where $\sigma \in \{ -1, +1 \}$ is a random equiprobable sign.
We will see a stronger domination result for hyperplane search trees further on.\hfill$\square$
\end{example}


\section{Height of dominated trees and a proof of Theorem \ref{thm:height}}\label{sec:height}

We recall the notion of a tree dominated by $Z$, and
define 
$$
Z^* = \max ( Z , 1-Z ) .
$$

\begin{lemma}\label{lem:height}
For constant $\gamma>0$, if 
$$
\inf_{\lambda > 0} e^\lambda \left( \EXP{ 2{Z^*}^\lambda } \right)^\gamma < 1,
$$
then
$$
\lim_{n\to\infty}\PROB{ H_n > \ceil{\gamma\log n} } = 0.
$$
\end{lemma}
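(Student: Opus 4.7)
The plan is a first-moment argument applied to the exponentially weighted sum
$$
M_k ~\isdef~ \sum_{v\text{ at depth }k} n_v^\lambda,
$$
taken over all nodes $v$ at depth $k$ in the dominated tree, where $n_v$ denotes the subtree size at $v$ and $\lambda > 0$ is a free parameter that I will optimize at the end.  The key observation is that $H_n > k$ forces some depth-$k$ node $v$ to have $n_v \geq d+1$ (otherwise every depth-$k$ node is either a leaf or splits into two empty children, so no data point has depth greater than $k$), and hence $M_k \geq 1$ on this event.  Markov's inequality then gives $\PROB{H_n > k} \leq \EXP{M_k}$, so the task reduces to controlling $\EXP{M_k}$.

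Next I would prove by induction on $k$ that
$$
\EXP{M_k} ~\leq~ n^\lambda \bigl(2 \EXP{{Z^*}^\lambda}\bigr)^k~.
$$
The base case $M_0 = n^\lambda$ is immediate.  For the inductive step, I condition on the tree up to depth $k-1$; at each node $v$ at that depth with subtree size $n_v$, the child sizes $L_v, R_v$ satisfy the property that $\max(L_v, R_v)/n_v$ is stochastically dominated by $Z^*$.  Combining the elementary pointwise inequality $x^\lambda + y^\lambda \leq 2 \max(x,y)^\lambda$ (valid for all $x, y \geq 0$ and all $\lambda > 0$) with the monotonicity of $x \mapsto x^\lambda$ then gives
$$
\EXP{L_v^\lambda + R_v^\lambda ~\big|~ n_v} ~\leq~ 2 \, n_v^\lambda \, \EXP{{Z^*}^\lambda}~.
$$
Summing over all depth-$(k-1)$ nodes yields $\EXP{M_k \mid \text{tree to depth } k-1} \leq 2 \EXP{{Z^*}^\lambda} \cdot M_{k-1}$, which iterates to the claimed bound.

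Finally, substituting $k = \ceil{\gamma \log n}$ and writing $\mu(\lambda) = 2 \EXP{{Z^*}^\lambda}$, the ceiling correction contributes only a bounded multiplicative factor, so
$$
\EXP{M_k} ~\leq~ \max\bigl(1, \mu(\lambda)\bigr) \cdot n^{\lambda + \gamma \log \mu(\lambda)} ~=~ \BIGO{n^{\log(e^\lambda \mu(\lambda)^\gamma)}}~.
$$
The hypothesis $\inf_{\lambda > 0} e^\lambda (\EXP{2{Z^*}^\lambda})^\gamma < 1$ supplies a $\lambda > 0$ for which $e^\lambda \mu(\lambda)^\gamma < 1$, making the exponent of $n$ strictly negative, so $\EXP{M_k} \to 0$ and the conclusion follows.

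The only subtlety I anticipate is that the dominated-tree hypothesis constrains only the maximum $\max(L,R)/n$ rather than the joint law of $(L/n, R/n)$, so a direct Marshall--Olkin convexity argument on $x^\lambda$ would require $\lambda \geq 1$.  The pointwise inequality $x^\lambda + y^\lambda \leq 2 \max(x,y)^\lambda$ sidesteps this restriction and keeps $\lambda$ free to range over all of $(0, \infty)$, which is essential for matching the infimum in the hypothesis.
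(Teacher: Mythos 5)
Your proof is correct and follows essentially the same approach as the paper: a first-moment argument at depth $\ceil{\gamma\log n}$ combined with a Markov bound on the $\lambda$-th moment of subtree sizes, with the domination hypothesis yielding a geometric decay factor $2\EXP{{Z^*}^\lambda}$ per level. The paper phrases this as a union bound over the $2^t$ root-to-depth-$t$ paths, representing the depth-$t$ subtree size as stochastically dominated by $n\prod_{i=1}^t Z_i^*$ for i.i.d.\ copies $Z_i^*$; your induction on the weighted sum $M_k$ (using the pointwise bound $x^\lambda+y^\lambda\le 2\max(x,y)^\lambda$ to keep $\lambda$ unrestricted) is the same calculation repackaged, and arrives at the identical bound $n^\lambda\bigl(2\EXP{{Z^*}^\lambda}\bigr)^t$.
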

\begin{proof}
Let $t=\ceil{\gamma\log n}$.
By domination, we know that both subtrees of the root are
stochastically not larger than $nZ^*$.
By repeating this observation as we descend away from the root
following any path of length $t$, we deduce that
the size of the subtree at that node is stochastically not
larger than
$$
n \prod_{i=1}^t Z_i^*,
$$
where $Z_1^*, Z_2^*, \ldots$ is an i.i.d.\ sequence distributed
as $Z^*$.
Therefore, by the union bound, and the fact that we have a binary tree,
using Markov's inequality, and a constant $\lambda > 0$,
\begin{align*}
\PROB{ H_n > t } 
&\le 2^t \PROB{ n \prod_{i=1}^t Z_i^* > d } \cr
&\le 2^t \left( { n \over d+1 } \right)^{\lambda} \EXP{ \left( \prod_{i=1}^t Z_i^* \right)^\lambda } \cr
&=  \left( { n \over d+1 } \right)^{\lambda} \left( 2 \EXP{ {Z^*}^\lambda } \right)^t. \cr
\end{align*}
The upper bound is not more than
$$
\left[  e^\lambda \left( \EXP{ 2{Z^*}^\lambda } \right)^\gamma \right]^{\log n},
$$
which tends to zero if
$$
e^\lambda \left( \EXP{ 2{Z^*}^\lambda } \right)^\gamma < 1.
$$
\end{proof}

\begin{proof}[Proof of Theorem \ref{thm:height}]
Let us take
$$
Z^* \inlaw \min \left( {1 \over 2} + a \sqrt{E+b}, 1 \right) \le {1 \over 2} + a \sqrt{E+b},
$$
where $E$ is standard exponential and $a, b > 0$.
Then 
$$
\EXP{ (2Z^*)^\lambda } 
\le \EXP{ \exp ( 2a\lambda \sqrt{E+b} ) }.
$$
Choose $\lambda = 1/(2a)$, and define $\rho = \EXP{ \exp ( \sqrt{E+b} ) }$.
Then
$$
e^\lambda \left( \EXP{ 2{Z^*}^\lambda } \right)^\gamma
\le \exp\left( \lambda + \gamma ( \log (2\rho) - \lambda \log 2 ) \right)
< 1
$$
provided
$$
\gamma  
> { \lambda \over \lambda \log 2 - \log (2 \rho) }
= { 1 \over \log 2 - 2a \log (2 \rho) }.
$$
In particular, if $a = \Theta (1/\sqrt{d})$, then
this, along with Lemma \ref{lem:height}, would imply Theorem \ref{thm:height}.
But Example \ref{ex:hyperplane} implies that a hyperplane search tree is dominated
by precisely such a $Z^*$, with $a = 1 / \sqrt{2d}$ and $b = \log 8$.
\end{proof}

\section{Logarithmic moments and depth of dominated trees}\label{sec:depth}

The depth of a random node in a tree dominated by $Z$
is determined by the \underbar{logarithmic} \underbar{moment}
$$
\mu 
= 2 \EXP{ Z \log (1/Z)}
= \EXP{ W \log (1/W)  + (1-W) \log (1/(1-W))}
= \EXP{ Y }
$$
where $Y$ is a random variable defined as follows:
$$
Y 
= 
\begin{cases}
  \log \left( { 1 \over W } \right) & \text{with probability $W$,} \\
  \log \left( { 1 \over 1-W } \right) & \text{with probability $1-W$.} \\
\end{cases}
$$
Note that since $x \log x$ is bounded on $[0,1]$, 
$\mu \ge 0$ is bounded. Also, $\mu = 0$ if and only if
$Z \in \{ 0, 1\}$, i.e., $Z$ is Bernoulli $(1/2)$ (recalling that $Z$
is symmetric).  We first provide two useful general lemmas for
computing a bound on the logarithmic moment and obtaining
a one-sided law of large numbers for general $Z$.

\begin{lemma}\label{lem:symmetric}
For a random variable $Z = 1/2 + \sigma V$, where $V$ is
$[0,1/2]$-valued, and $\sigma \in \{ -1, +1 \}$ is a random independent equiprobable sign,
$$
\mu \ge \log 2 -  \alpha \EXP{ V^2 },
$$
where $\alpha = 2 (1+ \sqrt{\log 8})^2 < 19$.
\end{lemma}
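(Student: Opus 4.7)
The plan is to strip off the random sign $\sigma$ and then bound the binary entropy $h(p)=-p\log p-(1-p)\log(1-p)$ from below by a quadratic. Because $h$ is symmetric about $1/2$, $h(Z)=h(1/2+V)$ almost surely, and a direct rewriting gives
$$
\mu \;=\; 2\,\EXP{Z\log(1/Z)} \;=\; \EXP{h(Z)} \;=\; \EXP{h(1/2+V)}.
$$
It therefore suffices to prove the \emph{deterministic} inequality $f(v):=\log 2-h(1/2+v)\le \alpha v^2$ for every $v\in[0,1/2]$, since then taking expectations gives the stated bound on $\mu$. Expanding, $f(v)=(1/2+v)\log(1+2v)+(1/2-v)\log(1-2v)$, and one checks $f(0)=f'(0)=0$ together with $f''(v)=4/(1-4v^2)$.

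The derivative $f''$ is monotone increasing on $[0,1/2)$ and blows up at $v=1/2$, so no single elementary inequality is simultaneously sharp near $0$ and near $1/2$. I would therefore fix a threshold $c\in(0,1/2)$ and split $[0,1/2]$ into two regimes. On $\{V\le c\}$, Taylor with integral remainder yields $f(v)\le \tfrac{v^2}{2}f''(c)=2v^2/(1-4c^2)$, so this part contributes at most $\bigl(2/(1-4c^2)\bigr)\EXP{V^2}$. On $\{V>c\}$, I would apply a uniform bound $f(v)\le K$ (valid for any $K\ge \log 2$, since $f(1/2)=\log 2$ is the maximum of $f$) together with Markov's inequality applied to $V^2$, $\PROB{V>c}\le \EXP{V^2}/c^2$, giving a contribution of at most $K\,\EXP{V^2}/c^2$.

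Summing the two estimates and substituting $r=4c^2\in(0,1)$ reduces the problem to minimizing $g(r)=2/(1-r)+4K/r$, which by elementary calculus (set $g'(r)=0$ and simplify) takes the closed form $(\sqrt{2}+2\sqrt{K})^2 = 2(1+\sqrt{2K})^2$ at its optimum $r^\star=2\sqrt{K}/(\sqrt{2}+2\sqrt{K})$. Choosing $2K=\log 8$---a legitimate tail bound since $\tfrac{1}{2}\log 8 = \tfrac{3}{2}\log 2 > \log 2$---produces exactly $\alpha=2(1+\sqrt{\log 8})^2$, and a direct numerical check gives $\alpha<19$.

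No individual step is deep; the content of the lemma is the interpolation between the quadratic behaviour of $f$ near $0$ and its bounded behaviour near $1/2$. The only bookkeeping obstacle is calibrating the tail constant $K$ so that the optimization closes with the advertised value---the small amount of slack absorbed by taking $K=\tfrac{1}{2}\log 8$ rather than the sharp $K=\log 2$ is precisely what produces the clean form $2(1+\sqrt{\log 8})^2$ in the statement, and a sharper optimization would give a slightly better constant at the cost of a less tidy expression.
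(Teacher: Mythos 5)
Your proof is correct and follows essentially the same two-regime strategy as the paper's: reduce to bounding the expectation of a convex, even error function, use the Taylor remainder with the increasing second derivative on the small-deviation regime, use a crude cap on the large-deviation regime, and optimize the threshold to land on $\alpha = 2(1+\sqrt{\log 8})^2$. The only cosmetic differences are that you execute the large-deviation step via Markov's inequality on $V^2$ where the paper instead absorbs the constant cap into a pointwise quadratic bound $f(v)\le (v/b)^2\log 2$, and that you use the exact $f''(v)=4/(1-4v^2)$ (the paper writes $1/(1-v)^2$ where $1/(1-v^2)$ is meant, but this is only an over-estimate and costs them nothing), compensating with a slightly looser cap $K=\tfrac12\log 8$ so the final constant matches the advertised one.
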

\begin{proof}
Note that
$$
-\mu 
= \EXP{ \Big.\left( \textstyle \frac{1}{2} + V \right) \log \left( \textstyle \frac{1}{2} + V \right) + \left( \textstyle \frac{1}{2} - V \right) \log \left( \textstyle \frac{1}{2} - V \right) \Big.}
= -\log 2 + \EXP{ f(2V) },
$$
where
$$
f(v) \isdef { (1+v) \log (1+v) + (1-v) \log (1-v)  \over 2}, 0 \le v \le 1.
$$
We check that $f(0) = f'(0) = 0$, $f''(v) = 1/(1-v)^2 \ge 0$,
so $f$ is convex and increasing to $f(1) = \log 2$.
On $[0,b]$, with $b < 1$, we have
$$
f(v) \le {1 \over (1-b)^2} \times {v^2 \over 2}.
$$
On $[b,1]$, we have $f(v) \le \log (2) \le (v/b)^2 \log 2$.
Combining this and choosing $b = \sqrt{\log 8} / (1+ \sqrt{\log 8})$, we see that
$$
f(v) \le  {1 \over 4} \, \alpha v^2, 0 \le v \le 1.
$$
\end{proof}

\begin{lemma}\label{lem:domination}
In a random binary tree dominated by $Z$,  having logarithmic
moment $\mu > 0$, we have for every $\epsilon > 0$,
$$
\lim_{n \to \infty} \PROB{ {D_n \over \log n}  \ge {1 \over \mu} + \epsilon } = 0.
$$
\end{lemma}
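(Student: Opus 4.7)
The plan is to track the descent of a uniformly random data point $X^*$ and derive a Chernoff-type tail bound for its depth $T^* = T(X^*)$, which will translate into a high-probability bound on $D_n = (1/n)\sum_x T(x) = \EXP{T^* \mid \text{tree}}$. The logarithmic moment $\mu$ plays the role of a positive drift.

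First, I would establish a moment bound on the shrinkage of the subtree containing $X^*$. Let $V_k$ denote the size of the subtree containing $X^*$ at depth $k$, with $V_0 = n$. Given the split $(L,R)$ at a node of current size $m$, the point $X^*$ descends into the left subtree with probability $L/m$, into the right with probability $R/m$, and becomes one of the $d$ pivots with probability $d/m$. Applying the convex ordering $\EXP{\psi(L/m) + \psi(R/m)} \le 2\,\EXP{\psi(Z)}$ with $\psi(x) = x^{1+s}$ for $s>0$, and combining with the size-biased descent, I expect
\[
\EXP{(V_k/V_{k-1})^s \,\big|\, V_{k-1},\,\text{history},\, T^*\ge k} ~\le~ 2\,\EXP{Z^{1+s}}\,(1+o(1)) ~=~ \EXP{e^{-sY}}\,(1+o(1)),
\]
where the identity $2\,\EXP{Z^{1+s}} = \EXP{e^{-sY}}$ follows from the symmetry of $Z$ (expanding $W = Z^*$ as $\max(Z,1-Z)$). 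Since $Z\in[0,1]$ and $\mu>0$, the quantity $\EXP{e^{-sY}}$ is strictly less than $1$ for every $s>0$. Iterating by conditional expectation and optimizing $s$ in a Markov bound gives a polynomial tail: for any $\epsilon>0$ there exists $\alpha=\alpha(\epsilon)>0$ with $\PROB{T^* > k_0} \le n^{-\alpha}$, where $k_0 = \ceil{(1/\mu + \epsilon/2)\log n}$.

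Second, I would convert this tail bound into a high-probability bound on $D_n$. From the deterministic decomposition
\[
D_n ~\le~ k_0 \,+\, \frac{1}{n}\sum_x T(x)\,\IND{T(x) > k_0},
\]
the expectation of the residual term is $\EXP{T^*\,\IND{T^* > k_0}}$. By Cauchy--Schwarz this is bounded by $\EXP{T^{*2}}^{1/2}\PROB{T^* > k_0}^{1/2}$, and an analogous second-moment computation (controlling $\EXP{T^{*2}}$ by the second moment of a random-walk hitting time with positive drift $\mu$) yields $\EXP{T^{*2}} = O(\log^2 n)$. Hence the residual expectation is $O(\log n)\cdot n^{-\alpha/2} = o(1)$, and a final Markov inequality gives $\PROB{D_n > (1/\mu + \epsilon)\log n} \to 0$.

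The main obstacle is the moment estimate in the first step: the dominated-tree hypothesis controls the split only on average via the convex ordering, whereas the descent of $X^*$ is size-biased. Turning the convex ordering into a genuine per-step conditional moment bound requires careful use of iterated conditional expectations, and the clean identity $2\,\EXP{Z^{1+s}} = \EXP{e^{-sY}}$ is what allows Chernoff-type optimization in $s$ to go through. One must also handle the $d/m$ pivot correction, which is innocuous while $V_k$ is large but has to be treated separately at small subtrees (for instance by absorbing those levels into the $o(\log n)$ slack between $k_0$ and $(1/\mu + \epsilon)\log n$).
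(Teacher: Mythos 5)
Your core argument is the same as the paper's: both run a Chernoff-type moment bound on the depth of a uniformly random data point, with the identity $2\,\EXP{Z^{1+s}} = \EXP{e^{-sY}}$ converting the convex-ordering hypothesis into a per-level contraction factor $\varphi(s)=\EXP{e^{-sY}}<1$, followed by an expansion $\varphi(s)=1-s\mu+o(s)$ and a choice of small $s$. The differences are in execution, and the paper's version is tighter in one useful way. The paper proves, by induction on $t$, the \emph{unconditional} bound $\PROB{D_n\ge t}\le n^\lambda\,\varphi(\lambda)^t$ via the decomposition $\PROB{D_n\ge t}\le\EXP{(L/n)\,\PROB{D_L\ge t-1}+(R/n)\,\PROB{D_R\ge t-1}}$. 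Because the weights $L/n,R/n$ sum to $(n-d)/n<1$, absorption at pivot nodes is built in and the Marshall--Olkin inequality applies directly to $\psi(x)=x^{\lambda+1}$, giving exactly $\varphi(\lambda)$ per level with no error term. Your version conditions on $T^*\ge k$ and therefore works with the normalized weights $L/(L+R),R/(L+R)$; these equal the paper's weights inflated by the factor $n/(n-d)$ and so are \emph{not} stochastically dominated by $(Z,1-Z)$ --- that is exactly the source of the $(1+o(1))$ you flag. As you observe, it can be absorbed: once a subtree has at most $2d$ points the remaining depth is $O(1)$ since each internal node removes $d$ data points, and while the subtree is larger the correction factor is at most $2$ and is $1+O(d/V_{k-1})$, which telescopes to $O(1)$ total. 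So the wrinkle is fixable but adds work the paper's formulation avoids. Your final step --- bounding the average depth $n^{-1}\sum_x T(x)$ via the decomposition $D_n\le k_0 + n^{-1}\sum_x T(x)\IND{T(x)>k_0}$, Cauchy--Schwarz, and a second-moment estimate $\EXP{T^{*2}}=O(\log^2 n)$ (which does follow from the tail bound by summation) --- is correct but not in the paper: the paper's $D_n$ in this lemma is the depth of a uniformly random data point, so its argument stops once the tail bound for $T^*$ is established. Your extra step would be needed if one insisted on the literal average-depth statement, and is a reasonable thing to add, but it is supplementary to what the paper actually proves here.
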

\begin{proof}
Let us begin with a small observation. Let $\lambda > 0$ be a parameter
and let $X \le 0$ be a nonpositive random variable. Then
$$
\lim_{\lambda \downarrow 0} \EXP{ { e^{\lambda X} - 1 \over \lambda } } = \EXP{ X }.
$$
This is best seen by noting that $(e^{\lambda x} - 1)/\lambda \ge x$, which provides a lower bound.
Since $(e^{\lambda X} - 1)/\lambda \le 0$, we have by Fatou's lemma,
$$
\limsup_{\lambda \downarrow 0} \EXP { { e^{\lambda X} - 1 \over \lambda } } 
\le 
\EXP{ \limsup_{\lambda \downarrow 0}{ e^{\lambda X} - 1 \over \lambda } }
= \EXP{ X }.
$$
Thus, as $\lambda \downarrow 0$,
$$
\varphi (\lambda) \isdef \EXP{ e^{-\lambda Y} } = 1 - \lambda \EXP{ Y } + o(\lambda) = 1 - \lambda \mu + o(\lambda).
$$
Let us show by induction on the integers $t$ that for all $\lambda \ge 0$, $n \ge 1$,
$$
\PROB{ D_n \ge t } \le n^\lambda ( \varphi (\lambda) )^t, t \ge 0.
$$
Assuming this for a moment, then we have with $t = \lceil (1/\mu + \epsilon) \log n \rceil$,
\begin{align*}
\PROB{ D_n \ge t } 
&\le n^\lambda ( 1 - \lambda \mu + o(\lambda) )^t \\
&\le \left[ e^\lambda ( 1 - \lambda \mu + o(\lambda) )^{1/\mu + \epsilon} \right]^{\log n} \\
&= \left[  1 - \lambda \mu \epsilon + o(\lambda) \right]^{\log n} \\
&= o(1) 
\end{align*}
if we choose $\lambda > 0$ small enough but fixed.
This would complete the proof.

For the proof by induction, note that for $t=0$, the inequality is trivial.
So, we consider a general $t > 0$. Then denoting by $X(L)$ and $X(R)$ the subsets of data points that end up in the left and right subtrees of the root, and by $D_L$ and $D_R$ the depths of random nodes (relative to their subtree roots) of data points randomly selected from $X(L)$ and $X(R)$,
respectively,
then
\begin{align*}
\PROB{ D_n \ge t}
&\le \EXP{ {L \over n} \PROB{ D_L \ge t-1 | X(L)} + {R \over n} \PROB{ D_R \ge t-1 | X(R)}  } \\
&\le \EXP{ {L \over n} L^\lambda ( \varphi (\lambda) )^{t-1} + {R \over n} R^\lambda ( \varphi (\lambda) )^{t-1} } \\
&\le n^\lambda ( \varphi (\lambda) )^{t-1} \EXP{ \left( {L \over n} \right)^{\lambda + 1} + \left( {R \over n} \right)^{\lambda + 1} } \\
&\le n^\lambda ( \varphi (\lambda) )^{t-1} \EXP{  Z^{\lambda + 1} + (1-Z)^{\lambda + 1} } \\
		&= n^\lambda ( \varphi (\lambda) )^{t-1} \EXP{ e^{-\lambda Y} } \\
		&= n^\lambda ( \varphi (\lambda) )^t.
\end{align*}
\end{proof}

The inequalities of Section \ref{sec:auxiliary} are powerful enough to obtain a depth bound that is almost asymptotically tight.  Our proof is of independent interest since it uses only Section \ref{sec:auxiliary}, the previous two lemmas, and simple textbook arguments.

\begin{proposition}\label{prop:depth}
Consider a  hyperplane search tree
for a collection of points $x_1, x_2,  \ldots , x_n \in \RR^d$ that are in general position.
For fixed $d$, there exists a constant $C(d)$ such that for all $\epsilon > 0$,
$$
\lim_{n \to \infty} \sup_{x_1,\ldots,x_n \in \RR^d} 
\PROB{ D_n \ge (C(d)+\epsilon) \log_2 n } = 0.
$$
Furthermore, as $d \to \infty$,
$$
C(d) = 1 + \BIGO{\frac{\log (d)}{d}}.
$$
\end{proposition}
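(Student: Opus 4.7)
The plan is to combine the domination structure established in Example~\ref{ex:hyperplane} with the two general lemmas of this section, Lemma~\ref{lem:symmetric} and Lemma~\ref{lem:domination}. Since Example~\ref{ex:hyperplane} shows that any hyperplane search tree on $n$ points in general position in $\RR^d$ is dominated (uniformly in the underlying point set) by the symmetric random variable $Z = 1/2 + \sigma V$, where $V = \min\!\left(1/2, \sqrt{1/(2(d+1)U)}\right)$ with $U$ uniform on $[0,1]$ and $\sigma$ an independent equiprobable sign, the uniform-in-$x_1,\ldots,x_n$ bound required by the proposition comes for free from the uniformity of the domination.

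Given this, I would apply Lemma~\ref{lem:domination} to reduce the problem to lower-bounding the logarithmic moment $\mu = 2\,\EXP{Z \log(1/Z)}$ of the dominating variable: the lemma yields $\PROB{D_n/\log n \ge 1/\mu + \epsilon} \to 0$, so converting bases one may take $C(d) = (\log 2)/\mu$ and the statement reduces to showing $(\log 2)/\mu = 1 + \BIGO{\log(d)/d}$, i.e., $\mu \ge \log 2 - \BIGO{\log(d)/d}$.

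To lower-bound $\mu$, I would invoke Lemma~\ref{lem:symmetric}, which gives $\mu \ge \log 2 - \alpha\, \EXP{V^2}$ for an absolute constant $\alpha < 19$. Computing $\EXP{V^2} = \EXP{\min(1/4,\,1/(2(d+1)U))}$ is a direct calculation: split the integral at the threshold $U = 2/(d+1)$. The low-$U$ region contributes $(1/4)(2/(d+1)) = 1/(2(d+1))$, and the high-$U$ region contributes $\int_{2/(d+1)}^{1} \frac{du}{2(d+1)u} = \frac{\log((d+1)/2)}{2(d+1)}$, so $\EXP{V^2} = \BIGO{\log(d)/d}$. Plugging this into Lemma~\ref{lem:symmetric} and then into the formula $C(d) = (\log 2)/\mu$ gives the claimed $C(d) = 1 + \BIGO{\log(d)/d}$.

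The only real obstacle is the logarithmic factor: the second moment of $V$ picks up an extra $\log d$ because the tail of $V$ behaves like $1/\sqrt{u}$, and $\EXP{V^2}$ is logarithmically divergent at $U=0$ before being truncated by the $1/2$ cap. This is exactly the gap between Proposition~\ref{prop:depth} and the asymptotically tight Theorem~\ref{thm:depth}, and it is precisely what Section~\ref{sec:kfacets} fixes by replacing Lemma~\ref{lem:small_balance} with Wagner's sharper $k$-facet bound, producing a dominating $V$ with a lighter tail whose second moment is $\BIGO{1/d}$.
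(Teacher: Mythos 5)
Your proposal is correct and follows the paper's proof essentially line by line: take the dominating $Z = 1/2 + \sigma V$ from Example~\ref{ex:hyperplane}, apply Lemma~\ref{lem:symmetric} to get $\mu \ge \log 2 - \alpha\,\EXP{V^2}$, compute $\EXP{V^2} = \frac{1 + \log((d+1)/2)}{2(d+1)} = \BIGO{\log(d)/d}$, and conclude via Lemma~\ref{lem:domination}. Your explicit split of the integral at $U = 2/(d+1)$ just spells out the computation the paper states directly, and your closing remark about where the stray $\log d$ comes from correctly anticipates the role of Section~\ref{sec:kfacets}.
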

\begin{proof}
Observe that in the definition of $\mu$, we can take
$$ 
Z = \frac{1}{2} + \sigma V,
$$
where
$$
V \isdef \min \left( \frac{1}{2} , \sqrt{ 1 \over 2(d+1) U } \right).
$$
By Lemma \ref{lem:symmetric}, for this $Z$,
\begin{align*}
\mu 
&\ge  \log 2 -  \alpha \EXP{ V^2 } \\
&=  \log 2 -  \alpha \EXP{ \min \left( {1 \over 4} , { 1 \over 2(d+1) U } \right) } \\
&=  \log 2 - {\alpha \over 4}  \EXP{ \min \left( 1 , { 2 \over (d+1) U } \right) } \\
&=  \log 2 - { \alpha \over 2(d+1)} \left( 1 + \log \left( { d+1 \over 2 } \right) \right).
\end{align*}
Combining Lemma \ref{lem:domination} with this then completes the proof.
\end{proof}

\section{Stronger bounds on $({\leq}k)$-facets and a proof of Theorem \ref{thm:depth}}\label{sec:kfacets}

Analysis of random hyperplane splits is directly related to the problem of counting $k$-facets in discrete geometry.  For a set of $n$ points in general position in $\RR^d$, a subset of $d$ points, along with an orientation, defines an oriented hyperplane with an associated positive open halfspace.  If this halfspace contains exactly $k$ of the remaining $n-d$ points, we say that the oriented set of $d$ points is a $k$-facet.  Thus each subset of $d$ points defines, for some $0\leq k \leq \lfloor{(n-d)/2} \rfloor$, a $k$-facet with one orientation and an $(n-d-k)$-facet with the other orientation.  A $({\leq}k)$-facet is simply a $j$-facet for some $j\leq k$.  Knowing the probability mass function of $N$ is equivalent to knowing the number of $({\leq}k)$-facets for every $0\leq k < \lfloor{(n-d)/2} \rfloor$.  For a thorough treatment of $k$-facets we direct the reader to Wagner's 2008 survey \cite{wagner08}.

A significant open conjecture in discrete geometry is the Spherical Generalized Upper Bound Conjecture, or {\sc sgubc}.  Forms of this conjecture were proposed independently by Eckhoff \cite{eckhoff93}, Linhart \cite{linhart94}, and Welzl \cite{welzl01}.  Wagner \cite[Conjecture 1.2]{wagner06} proposes the conjecture in full generality.  Here we state a slightly weaker form of the conjecture in the language of this paper, which would be implied by {\sc sgubc}.

\begin{conjecture}
For a set of $n\geq d$ points in general position in $\RR^d$, define random variable $N$ as the number of points on the larger side of a random hyperplane split.  Define $N^*$ as the analogous random variable for the larger side of a random hyperplane split for the moment curve data in $\RR^d$.
Then, for any $x > (n-d)/2$, 
$$
\PROB{ N \geq x } \leq \PROB{ N^* \geq x }.
$$ 
\end{conjecture}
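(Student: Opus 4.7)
My plan is to reduce the inequality to a uniform maximization statement about $({\leq}k)$-facet counts, and then to attack the Spherical Generalized Upper Bound Conjecture (SGUBC) of Eckhoff--Linhart--Welzl, of which the stated conjecture is a direct consequence.

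The first step is a clean translation between the split-size distribution and the $k$-facet count. Each unordered $d$-subset of the data points induces a hyperplane that partitions the remaining $n-d$ points into two sets of sizes $\{k, n-d-k\}$ with $k \le (n-d)/2$; under the two choices of orientation this subset is counted once as a $k$-facet and once as an $(n-d-k)$-facet. Therefore, for any $x > (n-d)/2$, writing $k = n - d - \lceil x \rceil$, the tail probability factors as
\[
\PROB{N \ge x} \;=\; \frac{f_{\le k}(S)}{\binom{n}{d}},
\]
where $f_{\le k}(S)$ denotes the total number of $({\le}k)$-facets of $S$. The same identity holds for $\PROB{N^* \ge x}$ with $S$ replaced by the moment-curve set $X$. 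Hence the conjecture is equivalent to the statement that for every $k < (n-d)/2$, the moment curve maximizes $f_{\le k}(\cdot)$ among all $n$-point configurations in general position in $\RR^d$.

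This is exactly SGUBC. I would attempt a proof by induction on $d$, using Welzl's resolution of the case $d=3$ as a base and Wagner's continuous-motion framework as the inductive machinery. Concretely, I would fix a generic direction, project $S$ onto a hyperplane, relate $({\le}k)$-facets of $S$ to the levels of the projected arrangement via Gale duality, and then average over the projection direction. The equality case would be preserved by the induction because every generic projection of the moment curve is again moment-curve-equivalent.

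The main obstacle is that SGUBC has resisted proof for over two decades, precisely because $({\le}k)$-facet counts do not decompose cleanly under projection for arbitrary point sets, while those of the moment curve do. A realistic fallback deliverable would be the weaker asymptotic inequality restricted to the tail regime $x - (n-d)/2 = \LITTLEOMEGA{\sqrt{n/(d+1)}}$, obtained by matching the Chebyshev--Cantelli bound of Lemma \ref{lem:balance} (which is uniform over $S$) against the explicit beta-binomial distribution of $N^*$ derived in Section \ref{sec:momentcurve}. Such a weakened form already suffices for every tree-structural application pursued in this paper.
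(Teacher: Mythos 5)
You have correctly identified the situation: the statement is a \emph{conjecture} in the paper (a slightly weakened form of SGUBC), explicitly presented as open, and the paper does not prove it---it cites Peck, Alon--Gy\H{o}ri, and Welzl for $d\le 3$ and then sidesteps it entirely by invoking Wagner's proven factor-$4$ relaxation (Lemma~\ref{lem:wagner}). Your translation $\PROB{N\ge x}=f_{\le k}(S)/\binom{n}{d}$ with $k=n-d-\lceil x\rceil$ is the standard and correct dictionary between tail probabilities and $({\le}k)$-facet counts, and your assessment that the full statement is out of reach is accurate.

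Two corrections are warranted, though. First, the inductive-projection route you sketch is speculative and not what is available; the proven substitute in the literature is Wagner's inequality, which is not an induction on $d$ bootstrapped from Welzl's $d=3$ result but an independent argument giving $\PROB{N\ge x}\le 4\,\PROB{N^*\ge x}$ uniformly. Second, and more substantively, your fallback claim---that matching the Chebyshev--Cantelli balance bound against the beta-binomial tail of $N^*$ ``already suffices for every tree-structural application pursued in this paper''---is wrong. The Chebyshev--Cantelli route is exactly what Proposition~\ref{prop:depth} uses, and it delivers only $C_D(d)=1+\BIGO{\log(d)/d}$, strictly weaker than the optimal $1+\BIGO{1/d}$ of Theorem~\ref{thm:depth}; the sharper rate genuinely requires Wagner's lemma combined with a Hoeffding bound on $N^*$. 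Moreover, no constant-factor domination $\PROB{N\ge x}\le C\,\PROB{N^*\ge x}$ can be extracted from Chebyshev alone: the Chebyshev tail decays only polynomially in $x-(n-d)/2$ while the moment-curve tail decays exponentially, so the ratio blows up outside a bounded central window, which is precisely why Wagner's stronger inequality is needed and why the paper presents it as a separate, essential ingredient.
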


The {\sc sgubc} conjecture is trivially true for $d=1$. 
For $d=2$, it was proved by Peck \cite{peck85} and Alon and Gy\H{o}ri \cite{alon86}.  Welzl \cite{welzl01} proved Conjecture 1 for $d=3$.  
Inequalities for the far right tail of $N$ were
obtained by Clarkson and Shor \cite{clarkson89}. 
For general $d\geq 1$, Wagner \cite{wagner06} proved a relaxed form of a conjecture closely related to the {\sc sgubc}
 that implies Lemma \ref{lem:wagner} below.

\begin{lemma}[Wagner]\label{lem:wagner}
With $N$ and $N^*$ defined as above, for any $x > (n-d)/2$, 
$$
\PROB{ N \geq x } \leq 4 \cdot \PROB{ N^* \geq x }.
$$
\end{lemma}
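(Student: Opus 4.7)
The plan is to reduce this probabilistic statement to a purely combinatorial inequality on $(\leq k)$-facet counts, for which Wagner's theorem provides the constant $4$. The introduction of this section already sets up the dictionary between hyperplane splits and oriented $k$-facets; the task is to make that correspondence quantitative and then cite Wagner.

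First, I would unpack $\PROB{N \geq x}$ as a count of $d$-subsets. Each of the $\binom{n}{d}$ unordered $d$-subsets of $P$ splits the remaining $n-d$ points into two classes of sizes $k$ and $n-d-k$. The two orientations of such a subset produce, in the terminology of Section~\ref{sec:kfacets}, one $k$-facet and one $(n-d-k)$-facet. Writing $\overline{e}_k(P)$ for the total number of oriented $k$-facets, a short double-counting argument gives, for any $k < (n-d)/2$, that $\overline{e}_k(P)$ equals the number of unordered $d$-subsets whose smaller side has exactly $k$ points. Since $x > (n-d)/2$ forces $n-d-x < (n-d)/2$, the event $\{N \geq x\}$ is precisely the event that the smaller side has at most $n-d-x$ points, so
$$
\PROB{N \geq x} \;=\; \frac{E_{\leq\, n-d-x}(P)}{\binom{n}{d}},
$$
where $E_{\leq j}(P) \isdef \sum_{i=0}^{j} \overline{e}_i(P)$ is the number of $({\leq}j)$-facets of $P$.

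Second, the identical identity applies to the moment curve data, yielding $\PROB{N^* \geq x} = E_{\leq n-d-x}(\Cnd)/\binom{n}{d}$ (using the combinatorial equivalence of any two point sets that are order-types of the cyclic polytope). Hence the lemma is equivalent to the purely combinatorial assertion
$$
E_{\leq k}(P) \;\leq\; 4\,E_{\leq k}(\Cnd) \qquad \text{for all } 0 \leq k < \tfrac{n-d}{2}.
$$

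Third, I would invoke Wagner's theorem from \cite{wagner06} directly, which is exactly this inequality: Wagner proves a relaxed version of the Spherical Generalized Upper Bound Conjecture asserting that no point set has more than a constant factor (namely $4$) more $({\leq}k)$-facets than the cyclic polytope. This is the main technical content of the lemma, and I do not attempt to reprove it; the hard step is Wagner's argument itself, which combines continuous-motion deformations of the point configuration with moment-based estimates on $k$-facet numbers. Once that external input is accepted, combining it with the two identities above yields $\PROB{N \geq x} \leq 4\,\PROB{N^* \geq x}$ immediately, completing the proof.
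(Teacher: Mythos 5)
Your proposal is correct and matches what the paper (implicitly) does: the paper gives no internal proof of this lemma, relying entirely on a citation of Wagner \cite{wagner06}, with the earlier paragraphs of Section~\ref{sec:kfacets} having already set up the dictionary between random hyperplane splits and oriented $({\leq}k)$-facets. Your explicit identity $\PROB{N \geq x} = E_{\leq n-d-x}(P)/\binom{n}{d}$ for $x > (n-d)/2$, together with its moment-curve analogue and Wagner's factor-of-$4$ inequality on $({\leq}k)$-facet counts, is exactly the reduction the paper leaves implicit before citing Wagner as the black box.
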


Clarkson and Shor \cite{clarkson89} proved a somewhat similar result many years earlier, but their bound only holds for the extreme tail of the distribution, \ie, as $x$ approaches 1.  It is therefore insufficient for our purposes.  Wagner's bound, on the other hand, is valid for the entire range of $x$ that concerns us.  In order to exploit this result using the machinery of the previous section, we must first bound $\PROB{ N^* \geq x }$ in an appropriate manner.

\begin{lemma}
For a set of $n$ distinct points on the moment curve in $\RR^d$, we have
$$
\PROB{ {N^* \over n}  \ge  x }
 \le \exp\left(-  2d\left(x - 1/2 \right)^2 \right), x \ge {1 \over 2}.
$$
\end{lemma}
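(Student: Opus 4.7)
My approach exploits the explicit beta-binomial representation of moment-curve splits derived in Section~\ref{sec:momentcurve}. For concreteness take $d = 2m-1$ odd (the even case is analogous, with $m = d/2$); then one side of the random moment-curve split, call it $N_1$, satisfies $N_1 \mid T \sim \mathrm{Binomial}(n-d, T)$ with $T \sim \mathrm{beta}(m, m)$, and $N^* = \max(N_1,\, n-d-N_1)$.

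The first step is a symmetry reduction. Since $T$ is symmetric about $1/2$, both sides are identically distributed, and for $x \ge 1/2$ the events $\{N_1/n \ge x\}$ and $\{(n-d-N_1)/n \ge x\}$ are disjoint (else $n-d = N_1+N_2 \ge 2xn \ge n$), so
\[
\PROB{N^*/n \ge x} \;=\; 2\,\PROB{N_1 \ge xn}.
\]

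For the one-sided tail I would use the order-statistic representation $T \inlaw U_{(m)}$, the median of $d$ i.i.d.~uniforms on $[0,1]$. As $n\to\infty$, binomial concentration drives $N_1/n \to T$, and $\PROB{T \ge x} = \PROB{\mathrm{Bin}(d,x) \le m-1}$. One-sided Hoeffding on this central binomial tail gives
\[
\PROB{T \ge x} \;\le\; \exp\!\left(-2d(x-\tfrac12)^2 - 2(x-\tfrac12) - \tfrac{1}{2d}\right),
\]
because the deviation of $\mathrm{Bin}(d,x)$ below its mean $dx$ to reach $m-1$ is $d(x-\tfrac12)+\tfrac12$. For finite $n$, conditioning on $T$ and applying Hoeffding a second time to the inner binomial yields an additional $\exp(-\Omega(n))$ factor that can be absorbed as $o(1)$, or more carefully controlled by splitting the integral over $T$ at $x$ and using the trivial bound $\PROB{\cdot}\le1$ on $\{T\ge x\}$.

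Combining the two steps gives $\PROB{N^*/n \ge x} \le 2\exp(-2d(x-1/2)^2 - 2(x-1/2) - 1/(2d))$. The stated inequality then follows upon noting that the prefactor $2\exp(-2(x-1/2)-1/(2d))$ is at most $1$ exactly when $2(x-1/2)+1/(2d) \ge \log 2$, which covers all $x$ bounded away from $1/2$. For the remaining narrow regime near $x=1/2$, where $\exp(-2d(x-1/2)^2)$ is itself close to $1$, I would sharpen Hoeffding via the symmetric Gaussian-tail inequality $2\Phi(-z) \le e^{-z^2/2}$ (applicable because the symmetric $\mathrm{beta}(m,m)$ is strictly sub-Gaussian, via Marchal--Arbel's proxy $1/(4(d+2))$, or equivalently via Slud's one-sided binomial bound). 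The main obstacle is precisely this constant-management: plain Hoeffding loses a factor of $2$ in the symmetry step that cannot be recovered without a sharper one-sided tail inequality near the median, and correctly stitching the two regimes together is the essential technical step.
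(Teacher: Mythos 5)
Your route is the paper's: reduce to the limiting beta split $B\sim\mathrm{beta}(\lceil d/2\rceil,\lceil d/2\rceil)$ via stochastic domination, convert $\PROB{B\ge x}$ to a lower binomial tail, and apply a one-sided Hoeffding bound. You track constants more carefully than the paper, and you correctly flag a genuine issue: the paper's chain writes ``$\le 2\PROB{\xi_{d,x}\le\lfloor d/2\rfloor}\le\exp(-2(dx-\lfloor d/2\rfloor)^2/d)$ (by Hoeffding's inequality),'' but Hoeffding bounds $\PROB{\xi_{d,x}\le\lfloor d/2\rfloor}$, not twice it. Exactly as you compute, what one actually gets for odd $d$ is $2\exp(-2d(x-1/2)^2-2(x-1/2)-1/(2d))$, and the residual $2\exp(-2(x-1/2)-1/(2d))$ exceeds $1$ whenever $x-\tfrac12<(\log 2-1/(2d))/2$, so plain Hoeffding does not close the near-median regime. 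This is a real gap in the paper's own proof that your accounting correctly exposes.

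Your sketched repair for that regime does not close it, however. Slud's inequality gives a Gaussian \emph{lower} bound on binomial tails, which is the wrong direction for an upper bound. The Marchal--Arbel sub-Gaussian proxy $\sigma^2=1/(4(d+2))$ for a symmetric $\mathrm{beta}(m,m)$ yields only $\PROB{B-\tfrac12\ge t}\le e^{-2(d+2)t^2}$; absorbing the symmetry factor $2$ into the gap between $e^{-2(d+2)t^2}$ and $e^{-2dt^2}$ then requires $2\le e^{4t^2}$, i.e., $t\ge\sqrt{\log 2}/2\approx 0.42$, an even narrower window than Hoeffding already covers. And $2\Phi(-z)\le e^{-z^2/2}$ is a fact about the standard normal distribution; it does not follow from a sub-Gaussian moment-generating-function bound, which is all Marchal--Arbel establishes. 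To invoke it one would need genuine Gaussian tail domination $\PROB{B-\tfrac12\ge t}\le\Phi(-t/\sigma)$, a stronger claim that you do not prove. So the near-median window remains open in your argument as well. Closing it requires a separate, elementary verification that $h(t)\isdef e^{-2dt^2}-2\PROB{B\ge \tfrac12+t}$ is nonnegative on $[0,1/2]$; note $h(0)=0$ and $h'(0)=2f_B(1/2)>0$ while the exponential term is flat at $t=0$, but a direct monotonicity or no-sign-change argument is still needed and is supplied neither by you nor by the paper.
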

\begin{proof}
After first fixing $x \ge 1/2$, for the sake of analysis we introduce random variables 
$$
B \inlaw \hbox{\rm beta} \left( {\lceil{d/2}\rceil, \lceil{d/2}\rceil} \right),
$$
and $\xi_{d,x}$ with a binomial $(d,x)$ distribution. 
It is known (see Devroye \cite{devroye90m-ary}) that $\PROB{ N^* / n  \ge  y } \leq \PROB{ \max (B, 1-B)  \ge  y }$ for all $y\geq 0$, \ie,
$N^*/n$ is stochastically dominated by $\max (B, 1-B)$. 
It is also known that the c.d.f.'s of $B$ and $\xi_{d,x}$ are duals: for $x \in (0,1)$,
$$
\PROB{ B  \ge  x }
=
\begin{cases}
\PROB{ \xi_{d,x} \leq (d-1)/2},& \text{if $d$ is odd,} \cr
{1 \over 2} \, \PROB{ \xi_{d,x} \leq (d-2)/2 }+ {1 \over 2} \PROB{ \xi_{d,x} \leq d/2 },& \text{if $d$ is even.} \cr
\end{cases}
$$
Thus,
\begin{align*}
\PROB{ {N^* \over n}  \ge  x } 
&\leq \PROB{ \max(B, 1-B)  \ge  x }  \cr
&= 2 \PROB{ B  \ge  x }  \cr
&\leq 2 \PROB{ \xi_{d,x}  \le  \lfloor{d/2}\rfloor } \cr
&\leq \exp{\left(-{ 2(dx - \lfloor{d/2}\rfloor)^2 \over d} \right)}  \quad\hbox{\rm (by Hoeffing's inequality \cite{hoeffding1963probability})} \cr
&\leq \exp\left(- { 2d^2(x - 1/2)^2 \over d} \right)  \cr
&= \exp{\left(-2d\left(x - 1/2 \right)^2\right)} ,
\end{align*}concluding the proof.
\end{proof}

\begin{proof}[Proof of Theorem \ref{thm:depth}]
Let $E$ be a standard exponential random variable, 
let $\sigma \in \{ -1,+1\}$ be a random equiprobable sign,
and define
$$
V = \min \left( {1 \over 2} , \sqrt{ E + \log 4 \over 2d } \right),
Z = { 1 \over 2}  + \sigma V.
$$
Note that for $1 \ge x \ge 1/2$,
\begin{align*}
\PROB{ {N \over n } \ge x }
&\le 4 \exp\left(-  2d\left(x - 1/2 \right)^2 \right) \cr
&= \PROB{ {1 \over 2} + V \ge x } \cr
&= \PROB{ \max (Z, 1-Z) \ge x },
\end{align*}
and thus the hyperplane search tree is dominated by this
$Z$.

The logarithmic moment $\mu$ of $Z$ is easily bounded
using Lemma \ref{lem:symmetric}.  With $\alpha = 2 (1 + \log 8)^2$ we have
\begin{align*}
\mu 
&\geq \log 2-\alpha \EXP{V^2} \cr
&= \log 2-\alpha \EXP{ \min\left({1\over 4},{E + \log 4 \over 2d}\right) } \cr
& = \log 2-{\alpha \over 4d}  \EXP{ \min\left(d,2E + 2\log 4 \right) } \cr
& \geq \log 2- {\alpha \over 4d}  \EXP{ 2E + 2\log 4 } \cr
& = \log 2- {\alpha (1 + \log 4) \over 2d}. \cr
\end{align*}
This implies that $1/\mu = (1+\BIGO{1/d})\log 2$ and therefore Theorem \ref{thm:depth} follows from Lemma \ref{lem:domination}. Thus, the sharper estimates
for domination that flow from Wagner's inequality give the optimal
rate of convergence with respect to $d$.
\end{proof}


\bibliographystyle{plain}
\bibliography{hyperplane_tree_2}

\end{document}